\title{\Huge$\,$\\[-2.75ex]
{Polar Coding for Non-Stationary Channels}\\[0.50ex]}
\author{\large%
Hessam Mahdavifar,\,\,\IEEEmembership{Member,~IEEE}
\vspace{-.25in}
\thanks{%
 The material in this paper was presented in part at the IEEE International Symposium on Information Theory in June 2017.
}
\thanks{%
This work was supported in part by the National Science Foundation,
under grant CCF--1763348.
}
\thanks{H.\ Mahdavifar is with the Department of Electrical Engineering and Computer Science, University of Michigan, Ann Arbor, MI 48104 (email: hessam@umich.edu).}
}
\newtheorem{theorem}{{Theorem}}
\newtheorem{lemma}[theorem]{{Lemma}}
\newtheorem{corollary}[theorem]{{Corollary}}
\newtheorem{definition}{{Definition}}
\newcommand{\cC}{{\cal C}} 
\newcommand{\cE}{{\cal E}}
\newcommand{\cN}{{\cal N}}
\DeclareMathAlphabet{\mathbfsl}{OT1}{ppl}{b}{it} 
\newcommand{\be}[1]{\begin{equation}\label{#1}}
\newcommand{\ee}{\end{equation}} 
\newcommand{\eq}[1]{(\ref{#1})}
\renewcommand{\leq}{\leqslant}
\renewcommand{\geq}{\geqslant}
\newcommand{\script}[1]{{\mathscr #1}}
\renewcommand{\Bbb}{\mathbb}
\newcommand{\N}{{\Bbb N}}
\newcommand{\R}{{\Bbb R}}
\newcommand{\Tref}[1]{Theo\-rem\,\ref{#1}}
\newcommand{\Lref}[1]{Lem\-ma\,\ref{#1}}
\newcommand{\Cref}[1]{Co\-ro\-lla\-ry\,\ref{#1}}
\newcommand{\deff}{\mbox{$\stackrel{\rm def}{=}$}}
\newcommand{\sY}{\script{Y}}
\newcommand{\shalf}{\mbox{\raisebox{.8mm}{\footnotesize $\scriptstyle 1$}
\footnotesize$\!\!\! / \!\!\!$ \raisebox{-.8mm}{\footnotesize
$\scriptstyle 2$}}}
\newcommand{\IN}{\overline{I}_N}
\begin{document}

\maketitle

\begin{abstract}

The problem of polar coding for an arbitrary sequence of independent binary-input memoryless symmetric (BMS) channels $\left\{W_i\right\}_{i=1}^{N}$ is considered. Such a sequence of channels is referred to as a \textit{non-stationary} sequence of channels and arises in applications where data symbols experience different and independent channel characteristics. The sequence of channels is assumed to be completely known to both the transmitter and the receiver (a coherent scenario). Also, at each code block transmission, each of the channels is used only once. In other words, a codeword of length $N$ is constructed and then the $i$-th encoded bit is transmitted over $W_i$. The goal is to operate at a rate $R$ close to the average of the symmetric capacities of $W_i$'s, denoted by $\IN$. To this end, we construct a polar coding scheme using Ar{\i}kan's channel polarization transform in combination with certain permutations at each polarization level and certain skipped operations. In particular, given a non-stationary sequence of BMS channels $\left\{W_i\right\}_{i=1}^{N}$ and $P_e$, where $0 < P_e <1$, we construct a polar code of length $N$ and rate $R$ guaranteeing a block error probability of at most $P_e$ for transmission over $\left\{W_i\right\}_{i=1}^{N}$ such that 
$$
N \leq \frac{\kappa}{(\IN - R)^{\mu}},
$$ 
where $\mu$ is a constant and $\kappa$ is a constant depending on $P_e$ and $\mu$. We further show a numerical upper bound on $\mu$ that is: $\mu \leq 7.34$ for non-stationary binary erasure channels and $\mu \leq 8.54$ for general non-stationary BMS channels. The encoding and decoding complexities of the constructed polar code preserve $O(N \log N)$ complexity of Ar{\i}kan's polar codes. In an asymptotic sense, when coded bits are transmitted over a non-stationary sequence of BMS channels $\left\{W_i\right\}_{i=1}^{\infty}$, our proposed scheme achieves the average symmetric capacity
$$
\overline{I}(\left\{W_i\right\}_{i=1}^{\infty}) \ \deff\ \lim_{N\rightarrow \infty} \frac{1}{N}\sum_{i=1}^N I(W_i),
$$
assuming that the limit exists. 

\end{abstract}

\begin{keywords} 
Polar codes, finite-length analysis, non-stationary channels, channel polarization
\end{keywords}

\section{Introduction} 
\label{sec:Introduction}

\noindent 
\PARstart{P}{olar} codes were introduced by Ar{\i}kan in the seminal work of \cite{Arikan}. Polar codes are the first family of codes for the class of binary-input symmetric discrete memoryless channels that are provable to be capacity-achieving with low encoding and decoding complexity~\cite{Arikan}. Polar codes and polarization phenomenon have been successfully applied to a wide range of problems including data compression~\cite{Arikan2,abbe2011polarization}, broadcast channels~\cite{mondelli2015achieving,goela2015polar}, multiple access channels~\cite{STY,MELK}, and physical layer security~\cite{MV,andersson2010nested,chou2015polar,wei2016polar}. The error exponent and finite-length scaling behavior of polar codes are also well-studied~\cite{AT,hassani,guruswami,MUH,GB,FV,pfister2016near,fazeli2017binary}. 

Since the invention of polar codes, a problem of significant interest has been the following: can polar codes achieve the capacity limit in more general point-to-point channel models than the original one considered by Ar{\i}kan? Ideally, one may want to find the \emph{most general} set-up in which polar codes can be constructed to achieve the point-to-point capacity limit. As stated before, polar codes were originally shown to achieve the capacity of binary-input symmetric discrete memoryless channels. Later, they were extended to channels with non-binary discrete inputs \cite{csacsouglu2009polarization, sahebi2013multilevel,gulcu2018construction}, asymmetric channels \cite{honda2013polar}, channels with memory \cite{csacsouglu2016polar,wang2015construction}, and continuous-input channels such as additive white Gaussian noise (AWGN) channel \cite{abbe2011polar,li2013practical} and fading channel \cite{si2014polar}. However, in all these extensions of polar codes, there is a common assumption that the polarization transform is applied to a stationary sequence of channels. Transmission over stationary channels is, in general, a common assumption in channel coding problems and hence, it is not often mentioned at all as it is considered the \emph{default} assumption. In this work, we aim at extending polar codes beyond the stationary set-up. More specifically, transmissions over a \textit{non-stationary} sequence of channels are considered, as explicitly defined next.

\subsection{Problem formulation and motivation}

Suppose that an arbitrary sequence of independent binary-input memoryless symmetric (BMS) channels $\left\{W_i\right\}_{i=1}^{N}$ is given. A coherent communication over these channels is assumed, i.e., the sequence of memoryless channels $\left\{W_i\right\}_{i=1}^{N}$ is assumed to be completely known to both the transmitter and the receiver. We refer to this sequence as a non-stationary sequence of BMS channels or simply non-stationary BMS channels ~\cite{Alsan}. As we only consider BMS channels in this paper, a shorter phrase of \textit{non-stationary channels} is used throughout the paper keeping in mind that all the channels under discussion are BMS. We consider the problem of channel coding, and in particular polar coding, for a given non-stationary sequence of channels. It is assumed that each channel is used once, i.e., the code block length is $N$, equal to the number of channels. This clarifies the rationale behind calling this a non-stationary sequence. In a \textit{standard} stationary setting, the sequence of channels over which coded bits are transmitted form a stationary process, i.e., the channel is either fixed through the course of one code block transmission or it changes according to some channel law that does not depend on the coded bit index. 

The goal here is to construct polar coding schemes with rates \emph{close} to the average of the symmetric capacities of $W_i$'s in a non-stationary sequence. More precisely, we study trade-offs between code block length, probability of error, and gap to the average capacity. This, especially for stationary settings, is a well-studied topic and is referred to as \textit{non-asymptotic} analysis or \textit{finite-length} analysis, interchangeably, in the literature. In particular, it is well-known that given a channel $W$ with capacity $C$ the minimum possible block-length $N$ required to achieve a rate $R$ with block error probability at most $P_e$ is roughly equal to \cite{dob,str,poly}
\be{min-length}
 N \approx \frac{V( Q^{-1}\bigl(P_{\rm e})\bigr)^2 }{(C-R)^2},
\ee
where $V$ is a characteristic of the channel referred to as channel dispersion in \cite{poly} and $Q$-function is defined as $Q(z)\,\deff\,\int_z^{\infty} \frac{e^{-x^2/2}}{\sqrt{2\pi x}} dx.$ In light of this result, we study the performance of our proposed polar coding schemes over non-stationary channels by upper bounding the gap to the average capacity in terms of the code block length $N$. Note that the terms ``capacity'' and ``symmetric capacity'' are used interchangeably provided that the channels under discussion are symmetric. This is because the symmetric capacity is equal to the capacity for symmetric channels. 

Besides the theoretical motivation to study polar coding beyond the original set-up considered by Ar{\i}kan \cite{Arikan} there are practical applications indicating non-stationary scenarios are becoming increasingly important in wireless systems. An application of non-stationary channels that one may immediately think of is transmission over a time-varying channel. While this is certainly a valid model, the assumption that both the transmitter and the receiver perfectly know the channel at each time instance is not realistic. Alternatively, we argue that the non-stationary scenarios become relevant in high data-rate applications where users enjoy a wide range of available frequency spectrum, spatial diversity, and high-order modulations. Next, we discuss this in the context of next generation of wireless networks, namely 5G. Note that this is for illustration purposes only and many details are skipped. The reader is referred to the mentioned references, and the references therein, for the detailed discussion on the subject.

It is expected that next generation 5G networks will deliver data rates up to $20$\,Gbps \cite{3gpp-5g}, an order of $20$ increase comparing to the legacy 4G networks. In order to meet such expectations, many new features have been introduced in the 5G standard. This includes communicating in millimeter-wave (mmWave) band together with enhanced carrier aggregation capabilities \cite{3gpp-5g-2} and \textit{massive} multiple-input multiple-output (MIMO) transmissions \cite{massive}. In 5G systems, aggregation of up to $16$ carriers will be deployed across the extensive available frequency spectrum including the bands below $6$\,GHz, the radio frequency (RF) band, as well as bands around $30$\,GHz, the mmWave band \cite{3gpp-5g-2}. This implies that transmissions may take place over up to $16$ different and \textit{almost} independent channels with different characteristics. This is mainly due to significantly different attenuations and diffractions that electromagnetic waves experience in different frequency bands. Another main feature of 5G systems is the massive MIMO transmission. It is expected that up to $64 \times 64$ MIMO transmissions will be enabled in 5G \cite{massive}. Under a coherent transmission scenario a $64 \times 64$ MIMO channel can be decomposed into $64$ parallel spatial channels which are independent \cite{zheng2003diversity}. Furthermore, a bit-interleaved coded modulation (BICM) together with quadrature amplitude modulation (QAM) are deployed in current cellular systems. Note that a $256$-QAM modulation, supported in current systems, together with a \emph{truly} random interleaver can be decomposed into $8$ independent parallel channels \cite{caire1998bit} (with a certain symbol labeling rule, there are actually $4$ pairs of channels and channels in each pair are independent but identical). Now, consider a user that is scheduled with $64 \times 64$ MIMO, $16$ carrier aggregation, and $256$-QAM modulation. This implies that the user observes $64 \times 16 \times 4 = 4096$ \textit{almost} independent channels with different statistical characteristics. 

A straightforward channel coding solution for such a scenario, as discussed above, is to encode the information independently across the given channels. For instance, suppose that a non-stationary sequence of $N$ channels is given and each of them can be used $l$ times in one block transmission. Then the straightforward solution is to encode information separately and independently across these channels, using codes of length $l$. However, from a finite-length analysis perspective, when $N$ and $l$ are comparable or when $N$ is much larger than $l$, such straightforward solutions can be extremely suboptimal in terms of gap to the average capacity. In other words, it is expected that a code of length $Nl$, carefully designed for the given sequence of channels, can significantly outperform the straightforward solution. In this paper, we study the extreme case where each channel is used only once in each transmission, i.e., $l=1$. However, the techniques introduced in this paper for non-stationary polar coding can be potentially generalized to cases with $l>1$ by combining our solution with existing polar coding schemes that have been proposed in stationary set-ups. 

\subsection{Related work}

The channel polarization problem for non-stationary channels was first considered in~\cite{Alsan}. In particular, it is shown in \cite{Alsan} that polarization happens by applying Ar{\i}kan's polarization transform. However, the proposed proof method is not powerful enough to conclude anything about the \textit{speed of polarization} ~\cite[Section IV]{Alsan}, although this notion is not explicitly defined in ~\cite{Alsan}. In particular, although a polar coding scheme is not introduced in \cite{Alsan}, one can not conclude the existence of an achievability scheme for non-stationary channels based on the proof of polarization provided in \cite{Alsan}. In Section\,\ref{sec:four}, we explicitly define the notion of \textit{speed of polarization} and the notion of \textit{fast} polarization, that can be applied in both the stationary and the non-stationary setups, from a finite-length analysis perspective. Note that an alternative notion of \textit{fast} polarization for stationary setups is introduced in the literature from an error-exponent analysis perspective \cite{tal2017simple}. 

In a stationary channel polarization process, independent and identical copies of the same channel $W$ are combined via a certain polarization transform \cite{Arikan}. For the resulting polar codes, it is shown that the required block length $N$ to guarantee a block error probability of $P_e$ is upper bounded by $c/\epsilon^\mu$,  where $\epsilon$ is the gap to the symmetric capacity, and $\mu$ and $c$ are constants \cite{hassani, guruswami}. Furthermore, it is shown in \cite{hassani} how to upper bound and lower bound this finite-length scaling behavior for different classes of channels. In other words, assuming that $N$ scales as $1/\epsilon^\mu$, upper bounds and lower bounds on the scaling exponent $\mu$ are derived in \cite{hassani}. These bounds were subsequently improved in \cite{GB,MUH}. In a sense, in this paper we extend such results to the case where the channels are independent but not identical by modifying Ar{\i}kan's channel polarization transform in many important respects.  

In another related line of work constructions of universal polar codes are studied \cite{csacsouglu2016universal,hassani2014universal}. Note that the construction of polar codes, as originally proposed by Ar{\i}kan \cite{Arikan}, is channel-dependent. In other words, the code has to be tailored to the channel in order to achieve the capacity. In universal polar coding setup, the goal is to construct polar-based codes that achieve the capacity for all the BMS channels with an equal capacity. This goal is attained, via different methods, in \cite{csacsouglu2016universal} and \cite{hassani2014universal}. The main difference between universal polar coding and non-stationary polar coding setups is the following. In universal polar coding, once the BMS channel is picked, from a given class of BMS channels, e.g., all BMS channels with equal capacity, the same channel is used, identically and independently, to carry all the coded bits. In other words, this setup still follows a stationary channel coding scenario. In the non-stationary polar coding setup, the channels to carry the coded bits can be all different and the code construction is still channel dependent. More specifically, the constructed polar codes, as described through this paper, depend on the given non-stationary sequence of channels. 

\subsection{Our contributions}

In this paper, we provide a precise notion of \textit{fast} polarization, from a finite-length analysis perspective, that can be used in both stationary and non-stationary set-ups.
Then we demonstrate a \emph{fast} polarization scheme for an arbitrary sequence of non-stationary BMS channels. To this end we modify Ar{\i}kan's channel polarization transform as the polarization process evolves. Certain permutations are applied to each sub-block of bit-channels at each polarization level before channel combining operations are applied. Also, the channel combining operations are skipped whenever it can not be guaranteed they improve a certain metric for polarization. The \emph{speed of polarization} is defined and bounds on the speed of the resulting polarization process are derived which show a \textit{fast} polarization. Furthermore, a two stage polarization process is shown which results in constructions of polar coding schemes that perform arbitrarily close to the average capacity of the channels. In particular, given $N = 2^n$ and $P_e > 0$, we construct a polar code of length $N$ with the gap to the average capacity $\epsilon$ and the probability of block error upper bounded by $P_e$ such that $N \leq \kappa / \epsilon^\mu$, where $\mu$ is a constant and $\kappa$ is a constant that depends on $\mu$ and $P_e$. Furthermore, a numerical upper bound on the parameter $\mu$ is established that is: $\mu \leq 7.34$ for non-stationary binary erasure channels and $\mu \leq 8.54$ for general non-stationary BMS channels. It is also shown that the encoding and decoding complexities of the constructed polar code preserve $O(N \log N)$ complexity of Ar{\i}kan's polar codes by adapting Ar{\i}kan's decoding trellis for successive cancellation decoding of polar codes. In an asymptotic sense, as the length of the proposed polar codes grows large, we achieve the limit of the average capacities, assuming that the limit exists. 

The rest of this paper is organized as follows. In Section\,\ref{sec:two} a brief overview of polar codes is provided. In Section\,\ref{sec:three} a \textit{modified} polarization scheme is shown by specifying certain permutations that we apply at each polarization level. In Section\,\ref{sec:four} the speed of polarization is defined and it is shown that the polarization scheme of Section\,\ref{sec:three} in combination with another modification results in a \textit{fast} polarization. In Section\,\ref{sec:five} a polar coding scheme is constructed via a two stage polarization and it is shown that it achieves the average capacity. Furthermore, bounds on the finite-length scaling exponent of the constructed codes are provided. Various numerical results on these bounds and simulation results are provided in Section\,\ref{sec:new}. Finally, the paper is concluded in Section\,\ref{sec:six}.

\section{Background}
\label{sec:two}

The \emph{channel polarization} phenomenon was discovered by Ar{\i}kan \cite{Arikan} and is based on a $2 \times 2$ polarization transform as the building block. The $2 \times 2$ polarization transform takes two independent copies of a binary-input discrete memoryless channel $W : \{0,1\} \to \sY$ and turns them into $(W^-,W^+)$, where $W^-$ and $W^+$ are also binary-input discrete memoryless channels. It is proved that by repeating the same process recursively, almost all the resulting bit-channels $W^{s^n}$, for $s^n \in \left\{+,-\right\}^n$, become either \emph{almost} noiseless with capacity very close to $1$ or \emph{almost} noisy, with capacity very close to $0$~\cite{Arikan}. 

The same polarization transform can be applied when the two underlying channels $W_1 : \{0,1\} \to \sY_1$ and $W_2 : \{0,1\} \to \sY_2$ are independent but not identical. The channel combining operations are defined as follows \cite{Arikan}: 
\begin{align}
\label{channel-com1}
&W_1 \boxcoasterisk W_2(y_1,y_2|u_1) = \frac{1}{2}  \sum_{u_2 \in \{0,1\}} W_1(y_1 | u_1 \oplus u_2) W_2(y_2|u_2), \\
\label{channel-com2}
&W_1 \circledast W_2(y_1,y_2,u_1|u_2) =  \frac{1}{2} W_1(y_1 | u_1 \oplus u_2) W_2(y_2|u_2),
\end{align}
where $y_1 \in \sY_1, y_2 \in \sY_2$, and $u_1,u_2 \in \left\{0,1\right\}$. In fact, when $W_1 = W_2 = W$, $W \boxcoasterisk W = W^-$ and $W \circledast W = W^+$. 

The problem of polarizing an arbitrary non-stationary sequence of channels $\left\{W_i\right\}_{i=1}^{\infty}$ was first considered in~\cite{Alsan}. Let $W_{n,i}$ denote the $i$-th bit-channel after $n$ levels of polarization for the sequence $\left\{W_i\right\}_{i=1}^N$, where $N = 2^n$. The parameters $N$ and $n$, where $N = 2^n$, are always used in this paper to specify the length of the polarization block and the constructed code. Let $ [\![N]\!]$ denote the set of positive integers less than or equal to $N$. It is proved in \cite[Theorem 2]{Alsan} that the fraction of non-polarized channels approaches zero, i.e., for every $0 < a < b <1$, 
\be{alsan}
\liminf_{N \rightarrow \infty} \frac{1}{N}\bigl| \left\{ i \in [\![N]\!]:  I(W_{n,i}) \in [a,b] \right\} \bigr| = 0.
\ee
However, this result does not conclude the existence of polar codes achieving the average symmetric capacity of $\left\{W_i\right\}_{i=1}^{\infty}$ \cite{Alsan}. 

Ar{\i}kan used the \emph{Bhattacharyya parameter} of $W$, denoted by $Z(W)$, to measure the reliability of $W$ and to upper bound the block error probability of polar codes. The Bhattacharyya parameter is defined as
$$
Z(W)
\,\ \deff\kern1pt
\sum_{y\in\sY} \!\sqrt{W(y|0)W(y|1)}.
$$
It is known that \cite{Arikan} \cite{Korada}
\be{eqZ1}
Z(W_1 \circledast W_2) = Z(W_1) Z(W_2),
\ee
and
\be{eqZ2}
Z(W_1 \boxcoasterisk W_2) \leq  Z(W_1) + Z(W_2) - Z(W_1)Z(W_2).
\ee
In particular, for a binary erasure channel (BEC) $W$ with erasure probability $p$, denoted by BEC$(p)$, we have $Z(W) = p$. Also, if $W_1 = \text{BEC}(p_1)$ and $W_2 = \text{BEC}(p_2)$, then both $W_1 \circledast W_2$ and $W_1 \boxcoasterisk W_2$ are also BECs with erasure probabilities $p_1p_2$ and $p_1+p_2-p_1p_2$, respectively \cite{Arikan}.

\section{A Modified Polarization Scheme}
\label{sec:three}
In this section we show a polarization process for non-stationary channels which combines Ar{\i}kan's polarization transform with a \textit{proper} sorting of the bit-channels at each polarization level. In the next section this polarization scheme is used, in combination with other modifications, to establish a \emph{fast} polarization.  

Suppose that a sequence of $N$ channels $\left\{W_i\right\}_{i=1}^{N}$, where $N = 2^n$, is given. A straightforward method of applying Ar{\i}kan's polarization transform results in the following process. In the first level of polarization, for $i \in [\![N/2]\!]$, $W_{2i-1}$ and $W_{2i}$ are combined and 
$$
W_{1,i} = W_{2i-1} \boxcoasterisk W_{2i}\ ,\ W_{1,N/2+i} = W_{2i-1} \circledast W_{2i}.
$$  
In the second level of polarization, the same procedure is applied to $\left\{W_{1,i}\right\}_{i=1}^{N/2}$ and $\left\{W_{1,i}\right\}_{i=N/2+1}^{N}$ in parallel. In general, in the $j$-th level of polarization, and for $i \in [\![N/2]\!]$, let 
$$
i = l2^{n-j} + r,
$$
where $r \in [\![2^{n-j}]\!]$ and $0 \leq l \leq 2^{j-1}-1$. Then we have
\begin{equation}
\label{ch-comb}
\begin{split}
W_{j,(2l)2^{n-j}+r} &= W_{j-1,2i-1} \boxcoasterisk W_{j-1,2i},\\
W_{j,(2l+1)2^{n-j}+r} &= W_{j-1,2i-1}  \circledast W_{j-1,2i},
\end{split}
\end{equation}
where the \emph{initial} channels at the level zero of polarization are $W_{0,i} = W_i$. 

\noindent
\textbf{Remark 1.\,}  Note that there is a subtle difference between how the indices of synthetic channels are arranged in \eq{ch-comb} versus how it is done in the non-stationary polarization process considered in \cite{Alsan}. In this paper, the polarization transform is considered together with the bit-reversal permutation, originally considered in \cite{Arikan}, whereas the bit-reversal permutation is not considered in \cite{Alsan}. Consider, for instance, the first level of polarization. With the bit-reversal permutation, the worse synthetic channels, the ones resulting from $. \boxcoasterisk .$ operations, are indexed by the elements of $[\![N/2]\!]$ and the better synthetic channels, the ones resulting from $. \circledast .$ operations, are indexed by the elements of $[\![N]\!] \setminus [\![N/2]\!]$. However, in \cite{Alsan}, which does not assume the bit-reversal permutation, the worse synthetic channels are indexed by odd numbers and the better synthetic channels are indexed by even numbers. Hence, in \cite{Alsan}, in the second level, synthetic channels with pairs of indices $(1,3)$, $(2,4)$, etc, are combined. Note that this is just another re-arrangement of indices at each level of polarization and does not change the logic behind the polarization process. 

In stationary channel polarization \cite{Arikan}, where all initial channels are identical, all the channels in each sub-block of length $2^{n-j}$, in the $j$-th level of polarization, are also identical. However, in our set-up, where the initial channels are not identical, one can combine channels in different orders. A deterministic process with certain permutations at each polarization level is defined next followed by specifying the permutations. 

\begin{definition}\label{polar1} (\emph{Deterministic Polarization 1}) Consider a polarization level $j$, $j=0,1,\dots,n-1$, with the sequence of channels $\left\{W_{j,i}\right\}$, $i \in [\![N]\!]$. This sequence is split into $2^j$ sub-blocks of equal length $2^{n-j}$ of consecutively indexed channels. Consider a sub-block $\left\{W_{j,l2^{n-j}+r}\right\}$, for $r\in [\![2^{n-j}]\!]$, where $0 \leq l\leq 2^{j}-1$. Let $\pi_{j,l}: [\![2^{n-j}]\!] \rightarrow [\![2^{n-j}]\!]$ denote a permutation which is determined according to a certain criteria. Then $W_{j,l2^{n-j}+r}$ is replaced by $W_{j,l2^{n-j}+\pi_{j,l}(r)}$ after which the channel combining operations of \eq{ch-comb} are applied to compute the synthetic channels of the next polarization level $j+1$. We refer to this polarization process, which is a combination of Ar{\i}kan's polarization process and certain permutations, as \textit{deterministic polarization} 1. For the sake of brevity, we often drop the permutations $\pi_{j,l}$ from the description of the \textit{deterministic polarization} 1 assuming that they are known and are applied, as described above, unless otherwise is mentioned. \end{definition}

In order to guarantee a \textit{fast} polarization, which will be demonstrated in the next section, the permutations $\pi_{j,l}$ are set in such a way that when they are applied to sub-blocks of channels, the sequence of Bhattacharyya parameters within each sub-block becomes non-increasing, i.e.,
$$
Z\bigl(W_{j,l2^{n-j}+\pi_{j,l}(r)}\bigr) \geq Z\bigl(W_{j,l2^{n-j}+\pi_{j,l}(r')}\bigr),
$$
if and only if $\pi_{j,l}(r) \leq \pi_{j,l}(r')$, where $r,r' \in [\![2^{n-j}]\!]$.

Next we describe how low complexity encoding and decoding of original polar codes \cite{Arikan} can be adapted to accommodate the \textit{deterministic polarization} 1 described above. Note that a memory of size $O(N \log N)$ must be added in order to save the permutations $\pi_{j,l}$ at both the encoder and the decoder. Since an efficient implementation of the low complexity successive cancellation (SC) decoder of polar codes requires only $O(N)$ space complexity, this means that the space complexity is increased to $O(N \log N)$ which is still \emph{quasi-linear} with $N$. 

The decoding process of polar codes consists of computations of likelihood ratios (LRs) through an $(n+1) \times N$ trellis reflecting the channel polarization operations \cite{Arikan}. The LRs at level zero are the channel reliabilities. At level $j$ of the decoding trellis, for $j \in [\![n]\!]$, the LR combination operations are done within $2^j$ sub-blocks of length $2^{n-j}$ each. The sub-blocks are processed successively but within each sub-block the LR combinations are done in parallel. The hard decisions on information bits are decided at the level $n+1$ of the trellis, where these hard decisions are propagated back through the trellis. The only modification that the \textit{deterministic polarization} 1 adds on top of this architecture is to apply the permutation $\pi_{j,l}$ to the $l$-th sub-block at level $j$ before processing that sub-block. Also, the inverse permutation $\pi_{j,l}^{-1}$ is applied to the hard decisions that are propagated back through the trellis. The resulting complexity of these changes will be negligible comparing to $O(N \log N)$ decoding complexity of polar codes. Also, since applying a permutation to a sequence can be done in one unit of time, the decoding latency $O(N)$ of polar codes is not affected by this modification. Similarly, the encoding process will be modified by invoking $\pi_{j,l}^{-1}$'s accordingly without affecting the total complexity or latency of the encoder.  

\noindent
\textbf{Remark 2.\,} In \cite{compound}, a scheme for permuting channels before applying the polarization transform, called \emph{compound polar codes}, is suggested. The scheme was shown to be applicable for BICM \cite{mahdavifar2016polar}. In a sense, \textit{deterministic polarization}~1 can be regarded as a \emph{recursive} compound channel polarization. In another related work in stationary settings, distinct channels in each polarization level, i.e., channels from different sub-blocks, are combined, as opposed to Ar{\i}kan's original method of combining identical channels, in order to attain universal polarization \cite{csacsouglu2016universal}. In \textit{deterministic polarization}~1, however, the channels to be combined are still within the same sub-block. Also, in \cite{twisted}, internal bit-channels induced by Ar{\i}kan's transformation are permuted (twisted) while being combined in order to obtain bit-channels with better performance.

\section{Fast Polarization for Non-Stationary Channels}
\label{sec:four}

Consider a non-stationary sequence of BMS channels $\left\{W_i\right\}_{i=1}^{N}$. The following metric is defined for this sequence of channels, which will be modified later in this section to define the \textit{speed of polarization}:
\be{Z-avg}
E(\left\{W_i\right\}_{i=1}^{N}) \,\deff\, \frac{1}{N} \sum_{i=1}^N z_i(1-z_i),
\ee
where $z_i = Z(W_i)$. Roughly speaking, if $E(.)$ is close to zero, then most of the channels have Bhattacharyya parameter either close to $0$ or close to $1$. Therefore, one would want to obtain a smaller $E(.)$ in order to have a better polarization. This metric is actually used in \cite{Arikan} in order to prove polarization, where a random process $\left\{Z_n\right\}$ is defined with respect to the polarization of a BMS channel $W$ \cite[Section IV]{Arikan} and then the convergence of the expected value $\cE\left\{Z_n(1-Z_n)\right\}$ is shown \cite[Proposition 9]{Arikan}. However, since the probability space of the random variable $Z_n$ is finite and the distribution is uniform, this expected value is equivalent to the average in \eq{Z-avg} computed over the $2^n$ synthetic channels after $n$ levels of polarization.
 
For simplicity, consider the case that initial $W_i$'s are identical and are BEC with erasure probability $p$, for some $p \in (0,1)$. For a BEC $W$, let $z = Z(W)$, $z^- = Z(W^-)$ and $z^+ = Z(W^+)$. Then by using the relations mentioned in Section\,\ref{sec:two} we have
\begin{align*}
\frac{z^+(1-z^+)+z^-(1-z^-)}{2z(1-z)} = 1 - z(1-z). 
\end{align*}
Note that $\sup_{z \in (0,1)} 1 - z(1-z) = 1$. This means that, intuitively, as the polarization process evolves, the \emph{speed of convergence} of $\cE\left\{Z_n(1-Z_n)\right\}$ reduces and one can not bound it away from $0$ while analyzing the finite-length performance of polar codes. In order to resolve this problem, functions of the type $z^b(1-z)^b$ are used in \cite{hassani} to replace quantities $z_i(1-z_i)$ in \eq{Z-avg}. In this section, we take a similar approach and define
\be{f-def}
f(z) \,\deff\,z^b(1-z)^b,
\ee
where the value of $b \in (0,1)$ is specified later. These types of functions are further extended in Section\,\ref{sec:new}, again similar to the approach in \cite{hassani}. Then the definition in \eq{Z-avg} is modified as follows:
\be{Z-avg2}
E(\left\{W_i\right\}_{i=1}^{N}) \,\deff\, \frac{1}{N} \sum_{i=1}^{N} f(z_i),
\ee
where $z_i = Z(W_i)$. 

Given the initial channels $W_{0,i} = W_i$, for $i\in [\![N]\!]$, $E_{j,n}$, for $j = 0,1,\dots,n$, is defined as follows:
\be{En-def}
E_{j,n} = E(\left\{W_{j,i} \right\}_{i=1}^{N}),
\ee
where $\left\{W_{j,i} \right\}_{i=1}^{N}$ is the sequence of synthetic channels in the $j$-th level of polarization according to the modified polarization scheme, discussed in Section\,\ref{sec:three}. 

\begin{definition}
\label{speed-def}
Given the initial channels $\left\{W_i\right\}_{i=1}^{N}$, the \emph{speed of polarization} at level $j$, where $j \in [\![n]\!]$, of the polarization process is defined as 
\be{speed-def1}
\eta_{j,n} \,\deff\, -\log E_{j,n}/E_{j-1,n},
\ee
and the \emph{average speed of polarization} at level $n$ is defined as 
\be{speed-def2}
\overline{\eta}_n \,\deff\, -\frac{1}{n}\log E_{n,n}. 
\ee
We call the polarization \emph{fast} if $\liminf_{n \rightarrow \infty} \overline{\eta}_n > 0$ for a given sequence of channels $\left\{W_i\right\}_{i=1}^{\infty}$.
\end{definition}

The goal is to find a lower bound for the speed of polarization $\eta_{j,n}$ and consequently for $\overline{\eta}_n$. This will require upper bounding expressions of the following form away from $1$:
\be{polar-rate}
\Delta_f(W,W') \,\deff\, \frac{f\bigl(Z(W \boxcoasterisk W')\bigr)+f\bigl(Z(W \circledast W')\bigr)}{f\bigl(Z(W)\bigr)+f\bigl(Z(W')\bigr)}.
\ee

Motivated by \eq{eqZ1}, \eq{eqZ2}, and \Lref{Z--lemma} on relations between Bhattacharyya parameters of combined channels, $g(z_1,z_2)$, for any $z_1,z_2 \in (0,1)$, is defined as follows:
\be{g-def}
g(z_1,z_2)\,\deff\,\sup_{z^- \in [\sqrt{z_1^2+z_2^2-z_1^2z_2^2},z_1+z_2-z_1z_2]} \frac{f(z_1z_2)+f(z^-)}{f(z_1)+f(z_2)}.
\ee
Then we have
\be{g-Delta}
\Delta_f(W,W') \leq g\bigl(Z(W),Z(W')\bigr).
\ee
Therefore, one can first upper bound $g\bigl(Z(W),Z(W')\bigr)$ and then use it in combination with \eq{g-Delta} in order to upper bound $\Delta_f(W,W')$. 

Note that since $f(.)$ is a convex function over $(0,1)$ with the maximum at $\shalf$, the supremum in \eq{g-def} happens at $z^- = \shalf$, if $\shalf \in [\sqrt{z_1^2+z_2^2-z_1^2z_2^2},z_1+z_2-z_1z_2]$ and otherwise, it happens at either of the extreme cases of $\sqrt{z_1^2+z_2^2-z_1^2z_2^2}$ or $z_1+z_2-z_1z_2$.

If we do not impose any restriction on $z_1$ and $z_2$, the supremum of $g(z_1,z_2)$ over all $z_1,z_2 \in (0,1)$ is at least $1$. This can be observed by setting $z_1 = p, z_2 = 1-p$, and letting $p \rightarrow 0$. Similarly, if $W$ is BEC$(p)$ and $W'$ is BEC$(1-p)$, then the supremum of $\Delta_f(W,W')$, defined in \eq{polar-rate}, is also at least $1$. In order to resolve this problem, the expression in \eq{polar-rate} is considered only when $Z(W_1)/Z(W_2)$ and/or $|Z(W_1) - Z(W_2)|$ are less than a certain threshold thereby making it possible to bound away the supremum of $g\bigl(Z(W_1), Z(W_2)\bigr)$ from $1$. This idea is elaborated through the rest of this section.

\begin{definition} \label{quant-def} (\emph{Fine Quantization of $(0,1)$})
The $[0,1]$ interval is quantized into roughly $c_1\log N + c_2$ sub-intervals as follows. The interval $[c,1-c]$ is quantized into equal sub-intervals of length $\lambda$, for some $c>0$. Then $[0,c]$ is quantized into $[0,c/2^m], [c/2^m,c/2^{m-1}],\dots, [c/4,c/2], [c/2,c]$, where $c/2^m \leq N^{-\tau}$ and $\tau$ is specified later. Therefore, we set
$$
m = \left \lceil \log c +\tau\log N \right \rceil.
$$
Similarly, $[1-c,1]$ is quantized into $[1-c,1-c/2], \dots,[1-c/2^{m-1},1-c/2^m],[1-c/2^m,1]$. 
\end{definition}

In total, we have
\be{intervals-num}
\begin{split}
2(m+1)+\frac{1-2c}{\lambda} & \leq 2\tau \log N + 2\log c + 4 + \frac{1-2c}{\lambda} \\
&= c_1\log N + c_2
\end{split}
\ee
sub-intervals in the quantization described above, where $c_1 = 2\tau$ and $c_2 = 2\log c + 4 + \frac{1-2c}{\lambda}$ are constants.  

Motivated by the quantization of $[0,1]$ described in Definition \ref{quant-def}, the function $h(z)$ for $z \in (0,1)$ is defined as follows:
\begin{equation}
\label{h-def}
h(z) \,\deff\,
\begin{cases}
\sup_{z' \in [z,2z]} &g(z,z')\ \text{if}\ z < c,
\\
\sup_{z' \in [z,\min(z+\lambda,1-c)]} &g(z,z')\ \text{if}\ c \leq z \leq 1-c,
\\
\sup_{z' \in [z,(1+z)/2]} &g(z,z')\ \text{if}\  z > 1-c,
\end{cases}
\end{equation} 
where $g(z_1,z_2)$ is defined in \eq{g-def}. The numerical analysis of the speed of polarization and its dependence on $\lambda$ and $c$ is discussed in Section\,\ref{sec:new}. 

It can be shown that $h(z)$ is continuous and bounded while its derivative is also bounded. Note that the right and the left derivatives may not be the same for $z = c$ and $z=1-c$, however, it does not affect the argument. Then,
\be{eta-def}
\eta\,\deff\, -\log \sup_{z \in (0,1)} h(z)
\ee
can be numerically estimated by quantizing $(0,1)$ with fine enough resolution. Also, note that $\eta$ depends on the parameter $b$ in the definition of $f(z) = z^b(1-z)^b$ and can be optimized by picking a proper $b$. In our analysis of the speed of polarization in this section and the scaling exponent of the constructed capacity-achieving codes in the next section we just need to assume that $\eta > 0$. Intuitively, larger $\eta$ leads to a larger speed of polarization. A more elaborate analysis of the numerical estimations of $\eta$ will be provided in Section\,\ref{sec:new}. 

The next lemma will be used in proving the main result of this section.

\begin{lemma}
\label{lemma1}
Let $W$ and $W'$ be two BMS channels with $z_1 = Z(W)$ and $z_2 = Z(W')$ and suppose that $z_1 \geq z_2$. If $z_1$ and $z_2$ satisfy either of the following three conditions: 
\begin{enumerate}
\item $c/2^m \leq z_2 < c$ and $z_1 \leq 2z_2$,
\item $c \leq z_2 < 1-c$ and $z_1 \leq \min(z_2 + \lambda, 1-c)$,
\item $1-c \leq z_2 \leq 1-c/2^m$ and $z_1 \leq (1+z_2)/2$,
\end{enumerate}
then
$$
\Delta_f(W,W') \leq 2^{-\eta},
$$
where $\Delta_f(.,.)$ is defined in \eq{polar-rate} and $\eta$ is given in \eq{eta-def}. 
\end{lemma}
\begin{proof}
By \eq{eqZ1}, \eq{eqZ2}, \eq{eqZ3}, and the definition of $g(.)$ in \eq{g-def} we have $\Delta_f(W,W') \leq g(z_1,z_2)$. By the definition of $h$ in \eq{h-def} and because $z_1$ and $z_2$ satisfy either of the three conditions we have
$$
g(z_1,z_2) \leq h(z_2) \leq \sup_{z \in (0,1)} h(z),
$$
which completes the proof.
\end{proof}

Before we state and prove the main theorem of this section, another modification on top of the \textit{deterministic polarization} 1, defined in Definition~\ref{polar1}, is discussed. We modify \textit{deterministic polarization} 1 by skipping certain channel combining operations. In particular, when we can not guarantee that $\Delta_f(W_{j,2i-1}, W_{j,2i}) \leq 1$, the combining operation of $W_{j,2i-1}$ and $W_{j,2i}$ is skipped. The indices for skipped operations are saved in a certain matrix $T$ specified next and will be used in both encoding and decoding of the constructed polar code, described in the next section, accordingly. 

\begin{definition}
\label{T-def}
The $n \times N/2$ matrix $T$, representing skipped channel combining operations, is defined as follows. For a pair of channels $W_{j-1,2i-1}$ and $W_{j-1,2i}$, $j \in [\![n]\!]$, $i \in [\![N/2]\!]$, if
\be{comb-cond1}
c/2^m \leq Z(W_{j-1,2i-1}), Z(W_{j-1,2i}) \leq 1-c/2^m,
\ee
where $c$ and $m$ are specified in the description of the fine quantization, in Definition\,\ref{quant-def}, and
\be{comb-cond}
\Delta_f(W_{j-1,2i-1},W_{j-1,2i}) \leq 1,
\ee
where $\Delta_f(.,.)$ is defined in \eq {polar-rate}, then $T(j,i) = 0$. Otherwise, $T(j,i) = 1$. 
\end{definition}

\begin{definition}\label{polar2} (\emph{Deterministic Polarization 2}) Consider the \textit{deterministic polarization} 1 defined in Definition~\ref{polar1} and a binary $n \times N/2$ matrix $T$. For each pair of channels $W_{j-1,2i-1}$ and $W_{j-1,2i}$ to be combined according to \eq{ch-comb}, we proceed with the channel combination operations only if $T(j,i) = 0$. Otherwise, 
$$
W_{j,(2l)2^{n-j}+r} = W_{j-1,2i-1}, W_{j,(2l+1)2^{n-j}+r}= W_{j-1,2i},
$$
where $i = l2^{n-j} + r$ and $r \in [\![2^{n-j}]\!]$. In other words, the channel combining operations are skipped for this particular pair of channels. We refer to this process as the \textit{deterministic polarization} 2.
\end{definition}
\textbf{Remark 3.\,} For the rest of this section, we consider the \textit{deterministic polarization} 2 with the matrix $T$, representing indices of skipped operations, as defined in Definition\,\ref{T-def}. Note that both $T$ and the polarization process are evolved jointly, i.e., when we are at the $j$-th level of the polarization process, the $j$-th row of $T$ is determined followed by performing $j$-th polarization level. Depending on the resulting bit-channels in the $j+1$-st level, the $j+1$-st row of $T$ is determined etc. However, the definitions of \textit{deterministic polarization} 2 and the matrix $T$ are separated since in the next section we will use \textit{deterministic polarization} 2 in combination with other matrices~$T$.  
\\\textbf{Remark 4.\,} Note that the skipped operations in the \textit{deterministic polarization} 2 only reduce the complexity of both encoding and decoding as the corresponding LR combining operations, related to the skipped channel combining operations, will be just transparent \cite{el2017relaxed}. The decoder needs to save the indices of the skipped operations which requires a memory of size at most $O(N \log N)$.

\begin{lemma}
\label{det-polar-lem}
In the \textit{deterministic polarization} 2, defined in Definition~\ref{polar2}, we have $E_{j,n} \leq E_{j-1,n}$, for $j \in [\![n]\!]$.
\end{lemma}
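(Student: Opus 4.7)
The plan is to decompose $N\cdot E_{j,n}$ into a sum indexed by the $N/2$ pairs of channels at level $j-1$ that are fed into the combining step producing level $j$. First I would observe that the permutations $\pi_{j-1,l}$ applied before the combining step only reorder the channels within each sub-block; they leave the multiset $\{W_{j-1,i}\}_{i=1}^N$ unchanged and hence do not affect $\sum_i f(Z(W_{j-1,i}))$, i.e., do not affect $E_{j-1,n}$. So I can safely assume the combining rule \eqref{ch-comb} is applied (or skipped) pair by pair in its natural index order.

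Next I would split the pairs into two classes according to the matrix $T$ of Definition~\ref{T-def}. For each pair $(W_{j-1,2i-1},W_{j-1,2i})$ with $T(j,i)=1$, the deterministic polarization~2 simply keeps both channels unchanged and places them in the new block. Hence the pair's contribution $f(Z(W_{j-1,2i-1}))+f(Z(W_{j-1,2i}))$ appears identically in the sum defining $E_{j,n}$. For each pair with $T(j,i)=0$, the hypotheses of \Lref{lemma1} are satisfied by construction (both $z_1,z_2 \in [c/2^m,1-c/2^m]$ and they lie in the same sub-interval of the fine quantization), so
\[
f\bigl(Z(W_{j-1,2i-1}\boxcoasterisk W_{j-1,2i})\bigr)+f\bigl(Z(W_{j-1,2i-1}\circledast W_{j-1,2i})\bigr)
\]
\[
\le 2^{-\eta}\bigl(f(Z(W_{j-1,2i-1}))+f(Z(W_{j-1,2i}))\bigr),
\]
recalling the definition of $\Delta_f$ in \eqref{polar-rate}. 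In particular, since $\eta>0$, the coefficient $2^{-\eta}$ is strictly less than $1$, so the pair's contribution strictly decreases.

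Summing over all pairs and dividing by $N$ gives $E_{j,n}\le E_{j-1,n}$ directly. The only genuinely substantive ingredient is \Lref{lemma1}, which is invoked precisely on those pairs for which $T(j,i)=0$; the skipped pairs contribute equality, and the permutation step contributes nothing. I do not anticipate a serious obstacle here since the whole construction of $T$ in Definition~\ref{T-def} was tailored to make the combining step non-increasing for the functional $\sum f(z_i)$; the lemma is essentially a bookkeeping statement that records this monotonicity.
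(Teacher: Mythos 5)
Your proof is correct and follows essentially the same route as the paper: the paper's own (much terser) proof likewise invokes \Lref{lemma1} on each combined pair to get $\Delta_f(W_{j-1,2i-1},W_{j-1,2i})\leq 2^{-\eta}<1$, notes that skipped pairs leave the Bhattacharyya parameters unchanged, and concludes by summing the $f(z_{j,i})$'s over all pairs. Your additional observation that the permutations $\pi_{j-1,l}$ preserve the multiset of channels, and hence $E_{j-1,n}$, is a detail the paper leaves implicit but changes nothing in substance.
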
 
\begin{proof}
According to the structure of the \textit{deterministic polarization} 2, when two channels $W_{j-1,2i-1}$ and $W_{j-1,2i}$ are combined, they satisfy $\Delta_f(W_{j-1,2i-1},W_{j-1,2i}) \leq 1$. If the combining operations are skipped, then the channels are the same at level $j$ and their Bhattacharyya parameters remain the same. Let $z_{j,i}$ denote $Z(W_{j,i})$. Then the lemma follows by summing $f(z_{j,i})$'s for all $i$'s and using the definition of $E_{j,n}$ in \eq{En-def}.    
\end{proof}

The following theorem summarizes the main result of this section. It shows that the \textit{modified} polarization scheme of Section\,\ref{sec:three}, with properly sorting bit-channels at each polarization level, together with carefully skipping certain channel combining operations results in a \textit{fast} polarization for non-stationary channels. Furthermore, a lower bound on the speed of polarization is provided that will be numerically computed in Section\,\ref{sec:new}.     

\begin{theorem}
\label{thm-main}
For $\rho < \frac{\eta}{\eta+1}$, where $\eta$ is defined in \eq{eta-def}, the parameters of the \textit{deterministic polarization} 2, defined in Definition \ref{polar2}, can be set such that for any $N=2^n$ and initial sequence of channels $\left\{W_{i}\right\}_{i=1}^N$, we have
$$
\overline{\eta}_n = -\frac{1}{n}\log E_{n,n} > \rho - \frac{c_{\rho}}{n},
$$
for a constant $c_{\rho} > 0$ depending only on $\rho$. 
\end{theorem}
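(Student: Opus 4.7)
The plan is to sharpen \Lref{det-polar-lem} into a quantitative per-level recurrence, iterate it up to a carefully chosen level $n_0<n$, and then invoke monotonicity. Let $S_j\subseteq[\![N/2]\!]$ denote the indices of pairs skipped at the transition from level $j-1$ to $j$ under Definition~\ref{T-def}, and set
\[
\beta_j \,\deff\, \sum_{i\in S_j}\bigl[f(z_{j-1,2i-1})+f(z_{j-1,2i})\bigr].
\]
Applying \Lref{lemma1} to every non-skipped pair (for which $\Delta_f\leq 2^{-\eta}$), while skipped pairs leave the $f$-mass unchanged, gives the key recurrence
\[
E_{j,n}\,\leq\,2^{-\eta}\,E_{j-1,n}+(1-2^{-\eta})\,\beta_j/N.
\]

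The crux is a sharp upper bound on $\beta_j/N$, obtained by partitioning $S_j$ according to Definition~\ref{T-def} into (i) pairs with at least one channel in the very-extreme region $[0,c/2^m]\cup[1-c/2^m,1]$, and (ii) pairs with both channels in the good range $[c/2^m,1-c/2^m]$ but in distinct quantization sub-intervals. Because the permutation $\pi_{j-1,l}$ sorts each sub-block of size $2^{n-j+1}$ by $Z$, the very-extreme channels cluster at the two ends, so each sub-block contributes at most two ``mixed'' pairs (one at each end), and inside the good range at most $K\,\deff\,c_1\log N+c_2$ interior boundary pairs of type~(ii). Bounding the $f$-mass of every very-extreme channel by $(c/2^m)^b\leq N^{-\tau b}$ and every other channel by $f(z)\leq(1/4)^b\leq 1$, the all-very-extreme mass is at most $N^{1-\tau b}$; summing the remaining contributions over the $2^{j-1}$ sub-blocks would yield
\[
\beta_j/N\,\leq\, N^{-\tau b}+(K+2)\cdot 2^{j-n}.
\]

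Unrolling the recurrence with $E_{0,n}\leq 1$ and summing two geometric series gives
\[
E_{n_0,n}\,\leq\,2^{-n_0\eta}+C_2\cdot 2^{-n\tau b}+C_1(K+2)\cdot 2^{n_0-n},
\]
with constants $C_1 = 1/(1-2^{-(\eta+1)})$ and $C_2 = 1/(1-2^{-\eta})$. I would now set $\tau=\rho/b$, forcing the middle term to equal $C_2\cdot 2^{-n\rho}$, and $n_0=\lceil n\rho/\eta\rceil$, so the first term is at most $2^{\eta}\cdot 2^{-n\rho}$. The hypothesis $\rho<\eta/(\eta+1)$ rearranges exactly to $\rho/\eta<1-\rho$, so the exponent $n_0-n\leq -n(\eta-\rho)/\eta+1$ supplies a linear-in-$n$ margin that dominates the $\log(K+2)=O(\log n)$ penalty, making the third term at most $2^{-n\rho+O(1)}$ for all $n$ beyond a threshold $n^\star(\rho)$. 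Combining the three bounds and using $E_{n,n}\leq E_{n_0,n}$ from \Lref{det-polar-lem} then gives $E_{n,n}\leq 2^{-n\rho+c_\rho}$ for $n\geq n^\star$, and the finitely many smaller values of $n$ are absorbed by inflating $c_\rho$ so that the bound becomes vacuous there. Taking $-(1/n)\log(\cdot)$ produces $\overline{\eta}_n\geq\rho-c_\rho/n$.

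The main obstacle will be the counting in the second step: it is essential that the number of mixed pairs at level $j$ be controlled by $O(2^j)$ rather than the trivial $\Theta(N)$, since only then does the weight $2^{-(n_0-j)\eta}$ suppress their contribution once summed. This is precisely what the sub-block sorting dictated by the permutations $\pi_{j,l}$ buys: it concentrates the very-extreme channels at the two ends of each sub-block. A secondary delicate point is the balance between $n_0$ and $n$: the strict inequality $\rho<\eta/(\eta+1)$ is exactly the condition that supplies a linear-in-$n$ buffer large enough to absorb the $O(\log n)$ quantization overhead $\log K$ into the additive constant $c_\rho$, rather than into the exponent.
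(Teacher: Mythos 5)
Your proposal is correct and takes essentially the same route as the paper's proof: a per-level contraction factor $2^{-\eta}$ from \Lref{lemma1} on non-skipped pairs, the sorted-sub-block count of $O\bigl((c_1\log N+c_2)2^j\bigr)$ skipped pairs per level, the choice $\tau=\rho/b$ making the extreme-region $f$-mass at most $2^{-\rho n}$, unrolling to an intermediate level whose feasibility is exactly the condition $\rho<\eta/(\eta+1)$, and finishing with the monotonicity of \Lref{det-polar-lem} plus inflation of $c_\rho$ for small $n$. The only differences are cosmetic bookkeeping: you carry the skipped mass as an inhomogeneous term $\beta_j$ and sum two geometric series where the paper subtracts the extreme mass $D_j$ from $E_{j,n}$ before contracting, and you stop at $n_0=\left\lceil n\rho/\eta\right\rceil$ where the paper stops at $k=\left\lceil n/(\eta+1)\right\rceil$.
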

\begin{proof}
Let $\rho< \frac{\eta}{\eta+1}$ and the sequence of channels $\left\{W_{i}\right\}_{i=1}^N$ be given. Let $\tau = \rho/b$, where the parameter $b$ is used in the definition of $f(.)$ in \eq{f-def}. Consider the \textit{deterministic polarization} 2 applied to the sequence of initial channels $\left\{W_{i}\right\}_{i=1}^N$. Let $z_{j,i} = Z(W_{j,i})$, for $i \in [\![N]\!]$ and $j=0,1,\dots,n$, where $W_{j,i}$'s are assumed to be sorted within each sub-block. Then define
\be{dj-def}
D_j = \frac{1}{N}\sum_{z_{j,i} < c/2^m \vee z_{j,i} > 1-c/2^m} f(z_{j,i}). 
\ee
Note that 
\be{dj-cond}
D_j \leq N^{-\tau b} = 2^{-\rho n}. 
\ee
Let $s = c_1\log N + c_2$ denote the number of sub-intervals in the fine quantization of $(0,1)$, described in Definition\,\ref{quant-def}. At level $j < n$, let $m_j$ denote the number of pairs of channels $(W_{j,2i-1},W_{j,2i})$ such that $c/2^m \leq z_{j,2i-1}, z_{j,2i} \leq 1 - c/2^m$ and, also, $z_{j,2i-1}$ and $z_{j,2i}$ do not satisfy any of the three conditions specified in \Lref{lemma1}. Then $z_{j,2i-1}$ and $z_{j,2i}$ must belong to different sub-intervals in the fine quantization, due to its structure described in Definition\, \ref{quant-def}. Since $z_{j,i}$'s are non-increasing within each sub-block, the number of pairs $(z_{j,2i-1}, z_{j,2i})$ that belong to different sub-intervals in the fine quantization is at most $s$, the total number of sub-intervals. Also, the number of sub-blocks is $2^j$. Hence, we have
\be{mj-eq}
m_j \leq s2^j.
\ee
Note that if $z_{j,2i-1}, z_{j,2i}$ satisfy one of the three conditions specified in \Lref{lemma1}, then $W_{j,2i-1}$ and $W_{j,2i}$ are combined in \textit{deterministic polarization} 2. This is by \Lref{lemma1} and knowing that $\eta > 0$. Then by applying \Lref{lemma1} to all such pairs of channels $(W_{j,2i-1},W_{j,2i})$ and noting that $D_j$ is non-decreasing with $j$ we get
\be{eq-thm1}
\begin{split}
E_{j+1,n} - D_{j+1} &\leq  E_{j+1,n} - D_j \leq \frac{m_j}{N} + (E_{j,n} - D_j)2^{-\eta}.
\end{split}
\ee
For $k \in [\![n]\!]$, we have
\begin{align}
\label{eq-thm5}
E_{k,n} & \leq D_k + \sum_{j=0}^{k-1} \frac{m_j 2^{-\eta(k-1-j)} }{N} + (E_{0,n}-D_0) 2^{-\eta k}\\
\label{eq-thm3}
&\leq D_k + \sum_{j=0}^{k-1} \frac{m_j }{N} + E_{0,n} 2^{-\eta k}\\
\label{eq-thm4}
& <D_k + s2^{-n+k}+E_{0,n}2^{-\eta k}\\
\label{eq-thm2}
&\leq 2^{-\rho n}+ s2^{-n+k}+2^{-\eta k},
\end{align}
where \eq{eq-thm5} is by by combining \eq{eq-thm1} for $j=0,1,\dots,k-1$, \eq{eq-thm3} is by noting that $D_0 \geq 0$, $\eta> 0$, and $k-1-j \geq 0$, \eq{eq-thm4} is by \eq{mj-eq} and summing a geometric series, and \eq{eq-thm2} is by \eq{dj-cond} and noting that $E_{0,n} \leq 1$.

Let $k = \left\lceil n/(\eta+1)\right\rceil$. We choose 
\be{crho-def}
c_{\rho} = \log \Bigl( 2+\max_{n\in \N} \bigl((c_1 n+c_2)2^{\rho n +\left\lceil n/(\eta+1)\right\rceil-n} \bigr) \Bigr),
\ee
which is well-defined because $\rho+1/(\eta+1)-1<0$. This is utilized to simplify \eq{eq-thm2} and to conclude that
$$
E_{k,n} < 2^{-n\rho + c_{\rho}}.
$$
By \Lref{det-polar-lem}, $E_{n,n} \leq E_{k,n}$ which completes the proof.

\end{proof}

\section{Code Construction and Finite-Length Analysis}
\label{sec:five}

In this section we show capacity achieving polar coding schemes for non-stationary channels and provide bounds on their finite-length scaling exponent.

In order to find explicit upper bounds on the Bhattacharryya parameters of the synthetic channels, we define an \emph{extremal} deterministic process. Roughly speaking, when two channels $W_1$ and $W_2$, with $z_1 = Z(W_1)$ and $z_2 = Z(W_2)$, are combined in the polarization process, we simply replace $Z(W_1 \boxcoasterisk W_2)$ by $z_1+z_2$, which, by \eq{eqZ2}, is an upper bound on $Z(W_1 \boxcoasterisk W_2)$. Due to having a simpler structure this process simplifies characterizing certain bounds in the finite-length regime. A more precise definition is provided next. 

\begin{definition}
\label{ext-proc}
Let $x_{0,i} \in [0,1]$, for $i \in [\![N]\!]$. A recursive process of combining $x_{j,i}$'s is defined as follows. For $j \in [\![n]\!]$, the sequence $\left\{x_{j-1,i}\right\}_{i=1}^N$ is split into $2^{j-1}$ equal sub-blocks, each containing $N/2^{j-1}$ consecutive elements. Then $x_{j-1,i}$'s are permuted in such a way that they become sorted in a non-increasing order within each sub-block, before being combined for the $j$-th level as described next. For $i \in [\![N/2]\!]$, let 
$$
u = x_{j-1,2i-1},\ \text{and}\ v = x_{j-1,2i}.
$$
Also, let $i = l2^{n-j} + r$, where $r \in [\![2^{n-j}]\!]$ and $0 \leq l \leq 2^{j-1}-1$. If $u,v \leq 1$ or $u,v \geq 1$, we let 
\be{z-comb1}
x_{j,(2l)2^{n-j}+r} = u+v,\ \text{and}\ x_{j,(2l+1)2^{n-j}+r} = uv,
\ee
and otherwise, i.e., when $u >1> v$, we let
\be{z-comb2}
x_{j,(2l)2^{n-j}+r} = u,\ \text{and}\ x_{j,(2l+1)2^{n-j}+r} = v.
\ee
In other words, the combining operation is skipped in this case. This finite deterministic process is referred to as the extremal deterministic process associated with $\left\{x_{0,i}\right\}_{i=1}^N$. 
\end{definition}

The function $q:\R^+\cup\left\{0\right\} \rightarrow \R^+\cup\left\{0\right\} $ is defined as follows:
\begin{equation}
\label{q-def}
q(x) \,\deff\,
\begin{cases}
x(2-x)\ &\text{if}\ x \leq 1,
\\
1 \ &\text{if}\ x > 1.
\end{cases}
\end{equation} 

In the next lemma we show that $q(.)$ can be used as a \emph{potential function} in an extremal deterministic process. 
\begin{lemma}
\label{pot-lemma}
In an extremal deterministic process, defined in Definition~\ref{ext-proc}, we have
$$
\sum_{i=1}^N q(x_{j,i}) \leq \sum_{i=1}^N q(x_{j-1,i}).
$$
for $j\in [\![n]\!]$. 
\end{lemma}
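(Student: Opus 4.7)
The plan is first to notice that permuting the $x_{j-1,i}$'s within each sub-block before combining does not change $\sum_i q(x_{j-1,i})$, so it suffices to establish the pairwise inequality
\[
q(\mathrm{out}_1) + q(\mathrm{out}_2) \;\leq\; q(u) + q(v)
\]
for every pair $(u,v)$ being combined or skipped according to \eq{z-comb1} and \eq{z-comb2}. Summing this over the $N/2$ disjoint input pairs at level $j$ then yields the lemma.

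I would then do a case analysis on $(u,v)$. In the skipped case $u > 1 > v$ the outputs equal the inputs and equality is immediate. When $u, v \geq 1$, both outputs $u+v$ and $uv$ lie in $[1,\infty)$, so $q$ takes the value $1$ on all four of $u,v,u+v,uv$ and equality again holds. The only non-trivial case is $u, v \leq 1$, where the outputs are $(u+v,\, uv)$ with $uv \leq 1$ automatically, while $u+v$ may or may not exceed $1$.

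In this non-trivial case I would split once more. If $u+v \leq 1$, a direct expansion using $q(x) = x(2-x)$ on $[0,1]$ yields
\[
\bigl(q(u+v)+q(uv)\bigr) - \bigl(q(u)+q(v)\bigr) \;=\; -u^2v^2 \;\leq\; 0.
\]
If instead $u+v > 1$, then $q(u+v) = 1$, and a short rearrangement shows
\[
\bigl(q(u)+q(v)\bigr) - \bigl(1 + q(uv)\bigr) \;=\; (1-u)(1-v)(u+v+uv-1),
\]
which is non-negative because $u,v \leq 1$ makes the first two factors non-negative and $u+v > 1$ makes the third factor strictly positive.

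The main obstacle is this last sub-case: the raw expansion is a quartic in $u,v$ whose sign is not apparent at a glance, and the whole argument rests on spotting the factorization into three factors whose signs are controlled by exactly the hypotheses $u,v \leq 1$ and $u+v > 1$ of the sub-case. All other sub-cases reduce either to equality or to a manifestly non-positive expression, so once this factorization is in hand the lemma follows by summing the pairwise bounds over the $N/2$ pairs processed at level $j$.
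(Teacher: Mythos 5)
Your proof is correct and follows essentially the same route as the paper's: the identical four-way case analysis on $(u,v)$, with the same three algebraic identities (difference $u^2v^2$ when $u+v\leq 1$, the factorization $(1-u)(1-v)(u+v+uv-1)$ when $u,v\leq 1 < u+v$, and exact equality when $u,v\geq 1$), plus the skipped case handled trivially. Your explicit remark that the within-block permutations leave $\sum_i q(x_{j-1,i})$ unchanged is a small clarification the paper leaves implicit, but it does not change the argument.
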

\begin{proof}
For $u,v \leq 1$, if $u+v \leq 1$, we have
\be{pot1}
q(u) + q(v) - q(uv) - q(u+v) = u^2 v^2 \geq 0,
\ee
and if $u+v > 1$, $q(u+v) =1$ and
\be{pot2}
q(u) + q(v) - q(uv) - 1 = (1-u)(1-v)(u+v+uv-1) \geq 0.
\ee
If $u,v \geq 1$, then
\be{pot3}
q(u) + q(v) - q(uv) - q(u+v) = 0.
\ee
Note that if $u>1>v$, a similar inequality does not always hold but according to \eq{z-comb2}, the combining operations are skipped in this case. Therefore, the lemma follows by using \eq{pot1}, \eq{pot2}, and \eq{pot3} together with definitions of combining operations in \eq{z-comb1} and \eq{z-comb2} in an extremal deterministic process.
\end{proof}
\begin{corollary}
\label{pot-cor2}
In an extremal deterministic process, defined in Definition~\ref{ext-proc}, we have
\be{det-lemma2-1}
\Bigl|\left\{i: i \in [\![N]\!], x_{j,i} \geq 1\right\}\Bigr| \leq \sum_{i=1}^N q(x_{0,i}),
\ee
for $j \in [\![n]\!]$. 
\end{corollary}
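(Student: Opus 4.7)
The plan is to combine the potential-function property from \Lref{pot-lemma} with the simple observation that $q(x) = 1$ whenever $x \geq 1$, so the cardinality we want to bound can be realized as a partial sum of $q$-values.

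First I would note that by the definition of $q$ in \eq{q-def}, $q(x_{j,i}) = 1$ for every index $i$ with $x_{j,i} \geq 1$. Consequently,
$$
\Bigl|\left\{i \in [\![N]\!] : x_{j,i} \geq 1\right\}\Bigr| = \sum_{i : x_{j,i} \geq 1} q(x_{j,i}) \leq \sum_{i=1}^N q(x_{j,i}),
$$
where the inequality uses $q(x) \geq 0$ on all of $\R^+\cup\{0\}$ (which is immediate from the definition, since $q(x) = x(2-x) \geq 0$ for $x \in [0,1]$ and $q(x) = 1$ for $x > 1$).

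Next I would iterate \Lref{pot-lemma}. Applying it $j$ times yields the telescoping bound
$$
\sum_{i=1}^N q(x_{j,i}) \leq \sum_{i=1}^N q(x_{j-1,i}) \leq \cdots \leq \sum_{i=1}^N q(x_{0,i}).
$$
Chaining this with the previous display gives \eq{det-lemma2-1}. There is no real obstacle here: the only thing one should be slightly careful about is that \Lref{pot-lemma} is stated for a single polarization step, so the corollary requires the straightforward induction on $j \in [\![n]\!]$ (with the base case $j = 0$ being a trivial equality). \QEDclosed
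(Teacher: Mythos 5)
Your proof is correct and matches the paper's own argument: the paper likewise bounds the cardinality by $\sum_{i=1}^N q(x_{j,i})$ using $q(x)=1$ for $x\geq 1$ (together with $q\geq 0$) and then chains \Lref{pot-lemma} down to $\sum_{i=1}^N q(x_{0,i})$. Your only addition is making the iteration of \Lref{pot-lemma} explicit, which the paper leaves implicit.
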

\begin{proof}
Since $q(x) = 1$ for $x\geq 1$ and by \Lref{pot-lemma} we have
$$
\Bigl|\left\{i: i \in [\![N]\!], x_{j,i} \geq 1\right\}\Bigr| \leq \sum_{i=1}^N q(x_{j,i}) \leq \sum_{i=1}^N q(x_{0,i}).
$$
\end{proof}
For a non-negative integer $i$, let $s(i)$ denote the number of $1$'s in the binary representation of $i-1$. In other words, let $b_1b_2\dots b_n$ denote the binary representation of $i-1$. Then
\be{s-def}
s(i)\,\deff\, \sum_{i=1}^n b_i.
\ee
Also, assuming $n$ is fixed, $s_j(i)$ denote the number of $1$'s in the leftmost $j$ bits in the $n$-bit binary representation of $i-1$, i.e.,
\be{sj-def}
s_j(i)\,\deff\, \sum_{i=1}^j b_i.
\ee
\textbf{Remark 5.\,} It can be observed that in the evolution of a bit-channel $W_{j,i}$, for $i \in [\![N]\!]$, the number of polarization levels where the operation $\circledast$ is applied is $s_j(i)$ and in the rest of $j-s_j(i)$ levels the operation $\boxcoasterisk$ is applied. Therefore, it is expected that the quality of the bit-channels $W_{n,i}$'s, e.g., their Bhattacharyya parameters, can be well bounded in terms of $s(i)$. Similarly, in an extremal deterministic process, $x_{n,i}$'s can be bounded in terms of $s(i)$ and the initial parameters. The following lemma provides such a bound, which is one of the key elements in establishing upper bounds on the probability of error of constructed polar coding schemes. 
\begin{lemma}
\label{key-lemma}
Let $x_i = x_{0,i} \in (0,\shalf)$, for  $i \in [\![N]\!]$. Consider an extremal deterministic process, defined in Definition~\ref{ext-proc}, associated with $x_i = x_{0,i}$. Then the criteria for permutations and skipped operations in the process can be set such that for $j \in [\![n]\!]$,
\be{det-lemma2-1}
\Bigl|\left\{i: i \in [\![N]\!], x_{j,i} \leq 2^{-2^{s_j(i)}+a_{j,i}} \right\}\Bigr| \geq N - 4 \sum_{i=1}^N x_i(1-x_i),
\ee
where $s_j(i)$ is defined in \eq{sj-def} and $\sum_{i=1}^N a_{j,i} \leq 3^j$.
\end{lemma}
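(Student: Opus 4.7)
The plan is induction on~$j$, with the cardinality and sum parts handled separately. For the cardinality, I would apply \Cref{pot-cor2} together with the elementary bound $q(x)=x(2-x)\le 4x(1-x)$ on $(0,1/2]$ (which reduces to $3x\le 2$). Since \Lref{pot-lemma} gives that $\sum_i q(x_{j,i})$ is monotone non-increasing under the extremal process regardless of the choice of permutations and skips, this yields $|\{i:x_{j,i}\ge 1\}|\le\sum_i q(x_{0,i})\le 4\sum_i x_i(1-x_i)$. Hence the number of $i$ with $x_{j,i}<1$ is at least $N-4\sum_i x_i(1-x_i)$, which will serve as the candidate ``good'' set for the target inequality.

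For the sum bound, I would define $a_{j,i}$ recursively along the polarization tree. Take $a_{0,i}:=1+\log x_i\le 0$, so that $x_{0,i}\le 2^{-2^0+a_{0,i}}$ holds with equality. For a pair $(u,v)$ with $u\ge v$ combined in Case~A1 ($u,v<1$ and $u+v<1$), set $a_{j,p}:=a_u+a_v$ on the squaring output $uv$ and $a_{j,p'}:=a_u+1$ on the addition output $u+v$; a short check using $u+v\le 2u$ and $uv\le 2^{-2^{s+1}+a_u+a_v}$ verifies that the required bound $x_{j,i}\le 2^{-2^{s_j(i)}+a_{j,i}}$ is preserved. Indices that become bad (Case~A2, both bad, or the $u$-side of a mixed pair) receive the conventional value $a_{j,i}=0$; indices arising from a skipped operation inherit an $a$ that accounts for any shift in~$s_j$, which for a migrating $v$ costs an extra~$2^s$.

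The remaining task is to choose the permutations and skip rules so that the per-level increment $\Delta_j=\sum_i a_{j,i}-\sum_i a_{j-1,i}$ stays at most $2\cdot 3^{j-1}$, giving $\sum_i a_{j,i}\le 3^j$ by induction. With a decreasing sort within each sub-block at level~$j-1$, the mixed Case~C occurs at most once per sub-block and contributes at most $2^{s_{j-1}(l)}$ to~$\Delta_j$ per sub-block; summing via the identity $\sum_l 2^{s_{j-1}(l)}=3^{j-1}$ bounds the total Case~C contribution by~$3^{j-1}$. The Case~A1 contributions need to be kept below $3^{j-1}$ as well, exploiting that the squaring output is routed to sub-block~$2l+1$, whose~$s_j$-budget is twice that of the parent, and that pairs whose $u+v$ is too close to~$1$ are either pushed into Case~A2 (where the $u+v$-index is absorbed into the bad set tracked by the $q$-potential) or subjected to a targeted skip.

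The main obstacle is exactly this control over Case~A1: the naive per-pair bound $a_{j,p'}\le a_u+1$ sums to $N_{A1}+\sum_{A1}a_u$, which can vastly exceed~$2\cdot 3^{j-1}$ when $N_{A1}=\Theta(N)$. Overcoming this requires a stronger per-sub-block invariant of the form $\sum_{i\in\text{sub-block }l} a_{j,i}\le 2^{s_j(l)}$, propagated through the split of sub-block~$l$ at level~$j-1$ into sub-blocks~$2l$ and~$2l+1$ at level~$j$, together with a skip rule whose correctness is tied back to the $q$-potential from \Lref{pot-lemma} that tracks the count of bad indices. Designing this skip rule and verifying the amortization between the $a$- and $q$-potentials is the technical crux of the argument.
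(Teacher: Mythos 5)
There is a genuine gap, and it is exactly the one you concede at the end: your proposal tracks the exponential bound directly on the $x$-process, and this forces a $+1$ penalty on every addition output (from $u+v\le 2u$), so the per-level increment of $\sum_i a_{j,i}$ contains a term equal to the number of combined pairs, which is $\Theta(N)$ at early levels while the budget is only $3^j$. The ``stronger per-sub-block invariant'' and the bespoke skip rule you invoke to repair this are never constructed, so the amortization at the heart of the claim $\sum_i a_{j,i}\le 3^j$ is missing. Your cardinality step is also aimed at the wrong set: bounding $|\{i: x_{j,i}\ge 1\}|$ via \Cref{pot-cor2} and $q(x)\le 4x(1-x)$ on $(0,\shalf]$ only yields the good set $\{x_{j,i}<1\}$, and membership there does not certify $x_{j,i}\le 2^{-2^{s_j(i)}+a_{j,i}}$ --- that certificate is precisely what your incomplete bookkeeping was supposed to produce. (A further mismatch: your base values $a_{0,i}=1+\log x_i$ are negative non-integers, whereas the downstream use in \Lref{2power-lemma} needs $a_i\in\N$.)

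The paper closes the gap with a coupling idea absent from your attempt: run a second extremal process $\{y_{j,i}\}$ with doubled initial values $y_{0,i}=2x_{0,i}<1$, and slave the permutations and skip criteria of the $x$-process to the $y$-process (this is exactly the freedom the phrase ``the criteria \ldots can be set'' grants). In the absence of skips one has the \emph{exact} identity $y_{j,i}=2^{2^{s_j(i)}}x_{j,i}$, because the addition branch scales linearly (factor $2^{2^s}$ preserved, $s$ unchanged) while the product branch squares the factor (giving $2^{2^{s+1}}$, $s$ incremented); thus the minus branch costs nothing, which eliminates your Case~A1 problem at the root. The quantities $a_{j,i}$ then measure only the defect $y_{j,i}\ge 2^{2^{s_j(i)}-a_{j,i}}x_{j,i}$ caused by skips: combining propagates $(\max,\,\mathrm{sum})$, and a skip adds $2^{s_j(2i)}$ on the index whose $s$-count shifts. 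Sorting within sub-blocks gives at most one skip per sub-block, so the skip cost per level is at most $\sum_{l=0}^{j}\binom{j}{l}2^l=3^j$, yielding $A_{j+1}\le 2A_j+3^j$ and hence $A_j\le 3^j$ by induction. Finally, $y_{j,i}\le 1$ certifies $x_{j,i}\le 2^{-2^{s_j(i)}+a_{j,i}}$, and the count of bad indices is \Cref{pot-cor2} applied to the $y$-process with $q(y_{0,i})=q(2x_i)=4x_i(1-x_i)$ exactly --- so the factor $4\sum_i x_i(1-x_i)$ in \eq{det-lemma2-1} comes from the doubled process, not from the inequality $q(x)\le 4x(1-x)$ you proposed, which in any case controls the wrong process.
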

\begin{proof}
Let $y_i = y_{0,i} = 2x_i <1$, for  $i \in [\![N]\!]$. Consider the extremal deterministic processes $\left\{x_{j,i}\right\}$ and $\left\{y_{j,i}\right\}$. Suppose that there were no skipped operations, i.e., the processes evolve only according to \eq{z-comb1}. Then it can be observed that at the $j$-th level of the process we would have
\be{key1}
y_{j,i} = 2^{2^{s_j(i)}} x_{j,i},
\ee
where $s_j(i)$ is defined in \eq{sj-def}. 

Next, the effect of skipped operations on \eq{key1} is characterized. We modify the criteria for skipped operations in $\left\{x_{j,i}\right\}$ such that it follows the pattern of skipped operations of the process $\left\{y_{j,i}\right\}$, i.e., the operations in \eq{z-comb1} are applied if both $y_{j-1,2i-1}$ and $y_{j-1,2i}$ are less than $1$ or greater than $1$ and otherwise, \eq{z-comb2} is applied. Also, the same permutations that are applied to the sub-blocks of $\left\{y_{j,i}\right\}$ to make them non-increasing within each sub-block of length $2^{n-j}$ are applied to the process $\left\{x_{j,i}\right\}$. 

The parameters $a_{j,i}$, for $j = 0,1,\dots,n$ and $i \in [\![N]\!]$, are recursively derived such that 
\be{key2}
y_{j,i} \geq 2^{2^{s_j(i)}-a_{j,i}} x_{j,i}.
\ee 
We confirm that \eq{key2} holds by induction on $j$. For the base of induction $j=0$, we define $a_{0,i} = 0$ and \eq{key2} holds. In level $j$, consider $u = y_{j,2i-1}$ and $v = y_{j,2i}$ to be combined for the next level. Let $i = l2^{n-j-1} + r$, where $r \in [\![2^{n-j-1}]\!]$ and $0 \leq l \leq 2^j-1$, and 
$$ 
i_1 = (2l)2^{n-j-1}+r,\ i_2 = (2l+1)2^{n-j-1}+r
$$ 
denote the indices of the combined $u$ and $v$ in the next level, i.e., if $u,v \geq 1$ or $u,v \leq 1$, then
\be{key3}
y_{j+1,i_1} = u+v,\ \text{and}\ y_{j+1,i_2} = uv.
\ee
In this case, we set
\be{key4}
a_{j+1,i_1} = \max(a_{j,2i-1},a_{j,2i}),\ \text{and}\  a_{j+1,i_2} = a_{j,2i-1}+a_{j,2i}.
\ee
By induction hypothesis \eq{key2} holds for $(j,2i-1)$ and $(j,2i)$. Also, note that 
$$
s_{j+1}(i_1) = s_j(2i-1) = s_j(2i) = s_{j+1}(i_2)-1.
$$
This together with the induction hypothesis, \eq{key3}, and \eq{key4} imply that \eq{key2} holds for $(j+1,i_1)$ and $(j+1,i_2)$. 

If $u > 1 > v$, then
$$
y_{j+1,i_1} = u,\ \text{and}\ y_{j+1,i_2} = v.
$$
In this case we set
\be{key5}
a_{j+1,i_1} = a_{j,2i-1},\ \text{and}\  a_{j+1,i_2} = a_{j,2i} + 2^{s_j(2i)}.
\ee
Then since $s_{j+1}(i_1) = s_j(2i-1)$, induction hypothesis holds for $(j+1,i_1)$. Also, we have
$$
2^{s_{j+1}(i_2)} - a_{j+1,i_2} = 2^{s_{j}(2i)+1} - a_{j,2i} - 2^{s_j(2i)} = 2^{s_{j}(2i)} - a_{j,2i},
$$
and therefore, induction hypothesis holds for $(j+1,i_2)$, too. 

What remains is to upper bound $\sum_{i=1}^N a_{j,i}$. Let
$$
A_j\,\deff\, \sum_{i=1}^N a_{j,i}.
$$
Note that at level $j$, there are $2^j$ sub-blocks of length $2^{n-j}$ each. Also, $s_j(i)$ is the same for all indices $i$ in one sub-block. Let $k$ denote the number of skipped operations at level $j$. Since $y_{j,i}$'s are sorted within each sub-block, there is at most one skipped operation within each sub-block and hence, we have $k \leq 2^j$. Let $i_1,\dots,i_k$ denote the even indices of the pairs in skipped operations. Then we have
\be{key6}
\sum_{l=1}^k 2^{s_j(i_l)} \leq \sum_{i=0}^{j} {j \choose i} 2^i = 3^j.
\ee
By \eq{key4}, \eq{key5}, and \eq{key6} we have
$$
A_{j+1} \leq 2A_j + 3^j.
$$
And by induction on $j$, $A_j \leq 3^j$. This together with \eq{key2}, \Cref{pot-cor2} applied to the process $\left\{y_{j,i}\right\}$, and noting that $q(y_i) = 4x_i(1-x_i)$ lead to the proof of lemma. 
\end{proof}

Next, we state the main result of this section. In the proof of the next theorem, a polar code construction along with a bound on its finite-length scaling is shown. The proof sketch and the idea behind the construction is described as follows. The construction is based on a two-stage \textit{deterministic polarization}. In the first stage, \textit{deterministic polarization} 2, defined in Definition\,\ref{polar2}, is invoked to obtain \emph{almost good} synthetic channels, whose fraction is close to the average capacity, promised by \Tref{thm-main}. In other words, \Tref{thm-main} is used at the last level of polarization in the first stage. In the second stage of the polarization, only the almost good synthetic channels of the first stage are further combined and the rest of the synthetic channels are discarded, i.e., they will not carry information bits. The channel combining operations in the second stage follow a certain extremal deterministic process, defined in Definition\,\ref{ext-proc}. At the end of the second polarization stage, a negligible fraction of synthetic channels are removed and it is shown that the remaining ones have Bhattacharyya parameters bounded by $P_e/N$, where $P_e$ is the target block error probability and $N$ is the block length. The end result is a polar code with a rate close the average capacity and with a block error probability bounded by $P_e$. These steps are clarified in the proof of the next theorem. 

\begin{theorem}
\label{thm-main2}
For a sequence of BMS channels $\left\{W_i\right\}_{i=1}^{N}$, target block error probability $P_e \in (0,1)$, and a constant $\mu$ such that
\be{mu-def}
\mu > 2+\log 3 + \frac{1}{\eta},
\ee
where $\eta$ is defined in \eq{eta-def}, we construct a polar code of length $N = 2^n$ and rate $R$ guaranteeing a block error probability of at most $P_e$ for transmission over $\left\{W_i\right\}_{i=1}^{N}$ such that 
\be{eq-main21}
N \leq \frac{\kappa}{(\IN - R)^{\mu}},
\ee
where $\kappa$ is a constant depending on $P_e$ and $\mu$, and $\IN$ is the average of the symmetric capacities $I(W_i)$, for $i \in [\![N]\!]$.
\end{theorem}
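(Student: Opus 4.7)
The plan is to construct the code through a two-stage polarization and then invoke the standard union-bound argument for successive-cancellation decoding. In stage one, I apply Deterministic Polarization~2 for $n_1$ levels so that the bit-channels concentrate near $Z=0$ or $Z=1$, and I identify the ``almost-good'' subset $G=\{i\colon Z(W_{n_1,i})\leq\delta\}$, where $\delta\in(0,\shalf)$ is a threshold to be chosen. In stage two, I partition $G$ into groups of $2^{n_2}$ consecutive channels and run a further $n_2$ levels of polarization inside each group; by \eq{eqZ1}--\eq{eqZ2}, the Bhattacharyya-parameter evolution within a group is dominated by the extremal deterministic process of Definition~\ref{ext-proc} initialised at $x_{0,i}=Z(W_{n_1,i})$. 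The information set $\mathcal A$ consists of the stage-two bit-channels whose Bhattacharyya parameter can be bounded by $P_e/N$, so that the union bound gives block error probability at most $\sum_{i\in\mathcal A}Z(W_{n,i})\leq P_e$.

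For stage one I fix an auxiliary $b\in(0,1)$ and invoke \Tref{thm-main} with $\rho$ arbitrarily close to $\eta/(\eta+1)$, obtaining $E_{n_1,n_1}\leq 2^{-\rho n_1+c_\rho}$. Every middle channel (with $z_i\in[\delta,1-\delta]$) contributes at least $(\delta(1-\delta))^b$ to $E_{n_1,n_1}$, so the fraction of such channels is at most $2^{-\rho n_1+c_\rho}(\delta(1-\delta))^{-b}$. Combining this with capacity preservation $\sum_i I(W_{n_1,i})=N\IN$ and the bound $I(W)\leq\sqrt{1-Z(W)^2}\leq\sqrt{2\delta}$ for bad channels (those with $z_i\geq 1-\delta$) yields
$$
|G|/N \;\geq\; \IN - \epsilon_1, \qquad \epsilon_1 = O\!\bigl(\sqrt{\delta}+2^{-\rho n_1}\delta^{-b}\bigr).
$$

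For stage two, applied separately to each group of size $2^{n_2}$, the hypothesis $x_{0,i}\in(0,\shalf)$ of \Lref{key-lemma} holds because $z_i\leq\delta<\shalf$. The lemma then guarantees that all but at most $4\sum x_i(1-x_i)\leq 4\cdot 2^{n_2}\delta$ group-indices $i$ satisfy $x_{n_2,i}\leq 2^{-2^{s_{n_2}(i)}+a_{n_2,i}}$, where $\sum_i a_{n_2,i}\leq 3^{n_2}$ inside the group. I select as information bits those indices with $s_{n_2}(i)\geq s^*$ and $a_{n_2,i}\leq A$, where the integer $s^*$ and real $A>0$ satisfy $2^{s^*}\geq\log(N/P_e)+A$; the lemma's bound then forces $Z\leq P_e/N$ for these indices. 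Markov's inequality on $\sum a_{n_2,i}\leq 3^{n_2}$ caps the number of group-indices with $a_{n_2,i}>A$ by $3^{n_2}/A$, while the number with $s_{n_2}(i)<s^*$ is $\binom{n_2}{<s^*}$, so the stage-two rate loss per group satisfies
$$
\epsilon_2 \;\leq\; 4\delta + \binom{n_2}{<s^*}/2^{n_2} + 3^{n_2}/(A\cdot 2^{n_2}).
$$

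The proof closes by choosing $b,\rho,\delta,n_1,n_2,A,s^*$ to minimise $n=n_1+n_2$ subject to $\epsilon_1+\epsilon_2\leq\IN-R$. Three independent constraints shape the exponent $\mu$: the $\sqrt{\delta}$ term in $\epsilon_1$ forces $\delta\lesssim\epsilon^2$, accounting for the ``$2$''; the $2^{-\rho n_1}\delta^{-b}$ term, optimised over $b$ with $\rho$ near $\eta/(\eta+1)$, forces $n_1\gtrsim\log(1/\epsilon)/\eta$, accounting for the ``$1/\eta$''; and the $(3/2)^{n_2}/A$ term, together with the constraint $2^{s^*}\gtrsim A$ and the ceiling $s^*\leq n_2$, forces a $\log 3$ contribution from $n_2$. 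Balancing all three yields $N\leq\kappa/(\IN-R)^\mu$ for any $\mu>2+\log 3+1/\eta$. The principal technical difficulty is the joint stage-two tradeoff between $A$ and $s^*$: shrinking $A$ controls $3^{n_2}/A$ but requires larger $s^*$ and hence a larger tail $\binom{n_2}{<s^*}/2^{n_2}$, and the impossibility of driving both quantities to zero simultaneously---because $(3/2)^{n_2}$ dominates $2^{n_2}$ whenever $s^*$ is pushed near $n_2$---is precisely what pins the stage-two contribution to $\log 3$ rather than to a smaller constant.
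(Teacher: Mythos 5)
Your overall architecture (stage one via Deterministic Polarization 2 and \Tref{thm-main}, stage two dominated by the extremal process and \Lref{key-lemma}, union bound at the end) matches the paper, but your stage-two selection rule contains a genuine gap: run at the full $n_2$ levels, the Markov bound on $a_{n_2,i}$ and the binomial tail on $s_{n_2}(i)$ are mutually incompatible, not merely the source of a $\log 3$ term. To keep the Markov loss $(3/2)^{n_2}/A$ below the budget $\epsilon$ you must take $A \geq (3/2)^{n_2}/\epsilon$, and your reliability condition $2^{s^*} \geq \log(N/P_e)+A$ then forces $s^* \geq n_2\log(3/2)+\log(1/\epsilon)$; since $\log(3/2) \approx 0.585 > 1/2$, the threshold exceeds the median of $s_{n_2}(i)$, so $\binom{n_2}{<s^*}/2^{n_2} > 1/2$ for \emph{every} choice of $n_2$, $A$, $s^*$, and the stage-two rate loss can never drop below one half. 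So this route yields no finite $\mu$ at all; your closing claim that the $A$-versus-$s^*$ tension ``pins the stage-two contribution to $\log 3$'' has it backwards. The paper's missing idea is to \emph{stop stage two early}, at $l = \lfloor n_2/(1+\log 3)\rfloor$ levels (all operations at levels $l < j \leq n_2$ are skipped). Then $\sum_k a_{l,k} \leq 3^l \leq 2^{n_2-l}$, i.e., on average at most one unit of $a$ per index, and \Lref{2power-lemma} counts the set $\left\{k : 2^{s_l(k)} - a_{l,k} \leq \alpha n_2 + \beta\right\}$ \emph{jointly} rather than by two separate tail bounds: since the threshold on $2^{s_l(k)}$ is only linear in $n$, the effective threshold on $s_l(k)$ is $O(\log n)$, and the discarded fraction is $2^{-l+p(n_2,\alpha,\beta)}$ with $p$ polylogarithmic. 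The $\log 3$ in \eq{mu-def} is precisely the block-length overhead $n_2/l = 1+\log 3$ of this early stopping.

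Two secondary problems would further inflate your exponent even if stage two were repaired. First, your stage-one bookkeeping with a vanishing threshold $\delta$ and $I(W) \leq \sqrt{1-Z(W)^2}$ is lossy: $\sqrt{\delta} \leq \epsilon$ forces $\delta \approx \epsilon^2$, and then $2^{-\rho n_1}\delta^{-b} \leq \epsilon$ forces $n_1 \geq (1+2b)\log(1/\epsilon)/\rho$, multiplying the $1/\rho$ contribution to $\mu$ by $1+2b$ (which is $2.44$ at the paper's $b = 0.72$). The paper instead keeps the good-channel threshold \emph{constant} ($z < \shalf$) and uses \Lref{IZ-lemma2} and \Cref{C-key}, namely $\sum_{z_i > t} I(W_i) \leq \frac{2}{t}\sum_i f(z_i)$, so the stage-one loss is $O(2^{-\rho n_1})$ with no $\sqrt{\delta}$ term, while the stage-two initial mass $4\sum Z(1-Z) \leq 4\sum f$ is controlled by the same $E$-quantity; note that the constant $2$ in \eq{mu-def} actually arises from $1/\rho > 1+1/\eta$ (the price $\rho < \eta/(\eta+1)$ of the skipped operations in \Tref{thm-main}) together with the $1$ in $1+\log 3$, not from a $\delta \approx \epsilon^2$ choice. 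Second, you run stage one globally and then regroup ``consecutive'' good channels; but bit-channels with different sign histories inside a single length-$2^{n_1}$ transform are statistically dependent, so \eq{eqZ1}--\eq{eqZ2} need not apply to such combinations. The paper sidesteps this by first invoking \Lref{partition-lemma} to split the $N$ channels into $N_2$ blocks of size $N_1$ whose per-block average capacity is at least $\IN - 2^{-n_1}$, running stage one inside each block, and combining in stage two only the $i_0$-th good bit-channels across \emph{distinct} blocks, which are genuinely independent.
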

\begin{proof}
The parameter $\rho \in \R^+$ is picked such that 
\be{rho-def1}
\mu > 1+\log 3 + \frac{1}{\rho} > 2+\log 3 + \frac{1}{\eta}.
\ee
In particular, we pick
\be{rho-def2}
\rho\,\deff\,\frac{2}{\mu+\frac{1}{\eta}-\log 3}.
\ee
Therefore, $\rho < \eta/(\eta+1)$. Let 
\be{n1-def}
n_1 = \left\lceil \frac{1}{1+\rho(1+\log 3)} n\right\rceil,
\ee
and
\be{n2-def}
n_2 = n - n_1 > \frac{\rho(1+\log 3)}{1+\rho(1+\log 3)} n - 1.
\ee
Let also $N_1 = 2^{n_1}$ and $N_2 = 2^{n_2}$. We exploit \Lref{partition-lemma} in Appendix to partition the set of channels $\left\{W_i: i \in [\![N]\!]\right\}$ into $N_2$ subsets $A_1,A_2,\dots,A_{N_2}$ of size $N_1$ such that
\be{main2-1}
\overline{I}_{N_1,k} \geq \IN - 2^{-n_1},
\ee
where $\overline{I}_{N_1,k}$ is the average of symmetric capacities of the channels in the set $A_k$, for $k \in [\![N_2]\!]$. 

In the first stage of polarization the \textit{deterministic polarization} 2, defined in Definition\,\ref{polar2}, is applied to each of the sets $A_k$ separately. The permutations $\pi^{(k)}_{j,l}$, for $j\in [\![n_1]\!]$, $k \in [\![N_2]\!]$ and $0 \leq l\leq 2^{j}-1$ are set in such a way that the sequence of Bhattacharyya parameters of bit-channels within the $l$-th sub-block of length $2^{n_1-j}$ in the $j$-th level of polarization of the set $A_k$ becomes non-increasing. Also, the sets $T_k$ representing the indices of the skipped operations in the polarization of $A_k$ are set according to Definition\,\ref{T-def}. Let $W^{(k)}_{j,i}$ denote the $i$-th bit-channel in the $j$-th polarization level of the set $A_k$ and $z^{(k)}_{j,i}$ denote its Bhattacharyya parameter. 

By \Tref{thm-main} there exists a constant $c_\rho$ such that
\be{main2-2}
\frac{1}{N_1} \sum_{i=1}^{N_1} f(z^{(k)}_{n_1,i}) \leq 2^{-n_1\rho + c_\rho}.
\ee
Then by applying \Cref{C-key} to $\left\{W^{(k)}_{j,i}\right\}_{i=1}^{N_1}$ with the choice of $t<\shalf$ we have 
\begin{align}
\label{main2-3-1}
\frac{1}{N_1} \Bigl|\left\{i \in [\![N_1]\!] : z^{(k)}_{n_1,i} < \frac{1}{2} \right\}\Bigr| &\geq \overline{I}_{N_1,k} - \frac{2}{tN_1}\sum_{i=1}^{N_1} f(z^{(k)}_{n_1,i})\\
\label{main2-3-2}
&\geq \IN - 2^{-n_1} - \frac{1}{t}2^{-n_1\rho + c_\rho+1}\\
\label{main2-3-3}
&\geq \IN - d_1 2^{-n_1\rho}\\
\label{main2-3}
&\geq \IN - d_1 N^{-1/\mu},
\end{align}
where \eq{main2-3-1} is by \Cref{C-key}, \eq{main2-3-2} is by \eq{main2-1} and \eq{main2-2}, \eq{main2-3-3} is by $d_1$ being defined as
\be{d1-def}
d_1 \,\deff\, \frac{1}{t}2^{c_{\rho}+1}+1,
\ee
and \eq{main2-3} is by \eq{rho-def1} and \eq{n1-def}.  

Let
\be{m-main-def}
M = \left\lfloor N_1 (\IN - d_1 N^{-1/\mu})\right\rfloor.
\ee
Without loss of generality, we can assume that
$$
z^{(k)}_{n_1,i} < \frac{1}{2},
$$
for $k \in [\![N_2]\!]$ and $i \in [\![M]\!]$. In the second stage of polarization, we will apply a polarization process to $\left\{W^{(k)}_{n_1,i}\right\}_{k=1}^{N_2}$, for $i \in [\![M]\!]$. Note that the actual indices of these good bit-channels within each of the polarization blocks, corresponding to sets $A_k$, does not matter as long as they are in increasing order. In other words, roughly speaking, the $i$-th good bit-channels in the polarization of the sets $A_k$ will be combined in the second stage of polarization. All the other $N_2(N_1 - M)$ bit-channels will not be further polarized and will be frozen to zeros. 

Fix $i_0 \in [\![M]\!]$ and for ease of notation let $W_k = W^{(k)}_{n_1,i_0}$. Let $x_{0,k} = Z(W_k)$ and $y_{0,k} = 2x_{0,k}$. Consider an extremal deterministic process $\left\{y_{j,k}\right\}$, for $j=0,1,\dots,n_2$ and $k\in [\![N_2]\!]$, defined in Definition\,\ref{ext-proc} with the initial values $\left\{y_{0,k}\right\}$. Also, consider another extremal deterministic process $\left\{x_{j,k}\right\}$ with the initial values $\left\{x_{0,k}\right\}$, where permutations and indices of skipped operations follow the process $\left\{y_{j,k}\right\}$. By \Lref{key-lemma}, for any $j \in [\![n_2]\!]$,
\be{main2-5}
\begin{split}
&\Bigl|\left\{k: k \in [\![N_2]\!], x_{j,k} \leq 2^{-2^{s_j(k)}+a_{j,k}} \right\}\Bigr|\\
& \geq N_2 - 4 \sum_{k=1}^N Z(W_k)\bigl(1-Z(W_k)\bigr),
\end{split}
\ee
where $s_j(k)$ is defined in \eq{sj-def} and $\sum_{k=1}^N a_{j,k} \leq 3^j$. Let 
\be{l-main-def}
l = \left\lfloor \frac{n_2}{1+\log 3}\right\rfloor,
\ee
and
\be{alpha-def}
\alpha = 1+ \frac{1}{\rho(1+\log 3)},\ \beta = -\log P_e + \alpha.
\ee
By \Lref{2power-lemma} we have 
\be{main2-6}
\begin{split}
&\log \Bigl|\left\{k \in [\![N_2]\!]: 2^{s_l(k)} - a_{l,k} \leq \alpha n_2 + \beta\right\}\Bigr| \\
&\leq n_2-l+p(n_2,\alpha,\beta),
\end{split}
\ee
where $p(n,\alpha,\beta)$ is defined in \eq{p-def}. Let $\gamma = 1+\log 3$ and define 
\be{qn-def}
\begin{split}
q(n)\,\deff\, n(\frac{1}{\mu} - \frac{\rho}{1+\gamma \rho}) + 1+ \frac{1}{\gamma} + p(\frac{\rho\gamma}{1+\rho\gamma}n,\alpha,\beta),
\end{split}
\ee
where $\alpha$ and $\beta$ are defined in \eq{alpha-def}. Note that by \eq{rho-def1} the coefficient of $n$ in the definition of $q(n)$ is negative. Also, $p$ is bounded by a polynomial of $\log n$. Therefore, 
$$
\lim_{n \rightarrow \infty} q(n) = -\infty
$$ 
and 
\be{d2-def}
d_2 = \sup_{n \in \N} q(n)
\ee
is a well-defined constant. Therefore, by the choice of $l$ in \eq{l-main-def} we have
\be{main2-6-2}
l - p(n_2,\alpha,\beta) \geq \frac{n}{\mu} - d_2
\ee
Also, by the choice of $\alpha$ and $\beta$ in \eq{alpha-def} we have
\be{main2-7}
2^{-\alpha n_2 - \beta} \leq \frac{P_e}{N}.
\ee
Then we have
\begin{align}
\notag
&\Bigl|\left\{k: k \in [\![N_2]\!], x_{l,k} \leq \frac{P_e}{N} \right\}\Bigr|\\
\label{main2-8-1}
&\geq \Bigl|\left\{k: k \in [\![N_2]\!], x_{l,k} \leq 2^{-\alpha n_2 - \beta} \right\}\Bigr|\\
\label{main2-8-2}
&\geq \Bigl|\left\{k: k \in [\![N_2]\!], x_{l,k} \leq 2^{-2^{s_l(k)}+a_{l,k}}\right\}\Bigr|\\
\notag
&- \Bigl|\left\{k \in [\![N_2]\!]: 2^{s_l(k)} - a_{l,k} \leq \alpha n_2 + \beta\right\}\Bigr|\\
\label{main2-8-3}
& \geq N_2 - 4 \sum_{k=1}^N Z(W_k)\bigl(1-Z(W_k)\bigr) - 2^{n_2-l+p(n_2,\alpha,\beta)}\\
\label{main2-8}
& \geq N_2 - 4 \sum_{k=1}^N Z(W_k)\bigl(1-Z(W_k)\bigr) - N_2 2^{-\frac{n}{\mu}+d_2},
\end{align}
where \eq{main2-8-1} is by \eq{main2-7}, \eq{main2-8-2} is by the union bound, \eq{main2-8-3} is by \eq{main2-5} and \eq{main2-6}, and \eq{main2-8} is by \eq{main2-6-2}. 

Recall that \eq{main2-8} holds for any $i_0 \in [\![M]\!]$, where we dropped the index $i_0$ from the derivations for notation convenience. Let $x_{j,k}$ be re-indexed by $x_{j,k+N_2(i_0-1)}$, for $i_0 \in [\![M]\!]$. By summing \eq{main2-8} for all $i_0$ we have
\be{main2-9}
\begin{split}
&\Bigl|\left\{i: i \in [\![MN_2]\!], x_{l,i} \leq \frac{P_e}{N} \right\}\Bigr|\\
&\geq MN_2 - 4 \sum_{i_0 \in [\![M]\!], k \in [\![N_2]\!]} z^{(k)}_{n_1,i_0}(1-z^{(k)}_{n_1,i_0}) - MN_2 2^{-\frac{n}{\mu}+d_2}.
\end{split}
\ee 
Also, by summing \eq{main2-2} over all $k \in [\![N_2]\!]$ and noting that since $b\in (0,1)$, $f(z)= z^b(1-z)^b \geq z(1-z)$, for any $z \in [0,1]$, we have
\be{main2-10}
\begin{split}
\sum_{i_0 \in [\![M]\!], k \in [\![N_2]\!]} z^{(k)}_{n_1,i_0}(1-z^{(k)}_{n_1,i_0}) & \leq  \sum_{i_0 \in [\![M]\!], k \in [\![N_2]\!]} f(z^{(k)}_{n_1,i_0})\\ 
& \leq N2^{-n_1\rho + c_\rho} \\
& \leq 2^{c_\rho} N^{1-\frac{1}{\mu}}.
\end{split}
\ee
By combining \eq{main2-10} and \eq{main2-9} we have
\be{main2-11}
\Bigl|\left\{i: i \in [\![MN_2]\!], x_{l,i} \leq \frac{P_e}{N} \right\}\Bigr| \geq MN_2 - (2^{c_\rho+2} +2^{d_2})N^{1-\frac{1}{\mu}},
\ee
where we also used $M \leq N_1$. By plugging $M$ from \eq{m-main-def} in \eq{main2-11} and normalizing both sides by $N$ we get
\be{main2-12}
\frac{1}{N}\Bigl|\left\{i: i \in [\![MN_2]\!], x_{l,i} \leq \frac{P_e}{N} \right\}\Bigr| \geq \IN - d_3 N^{-\frac{1}{\mu}},
\ee
where $d_3$ is defined as
\be{d3-def}
d_3\,\deff\,d_1 + 2^{c_\rho+2} +2^{d_2} +1,
\ee
where $d_1$ is defined in \eq{d1-def}, $d_2$ is defined in \eq{d2-def}, $\rho$ is defined in \eq{rho-def2}, and $c_\rho$, as a function of $\rho$, is defined in \eq{crho-def}. Note that $d_3$ is a constant that depends only on $\mu$ and $P_e$.

In the second stage of polarization, \textit{deterministic polarization} 2, defined in Definition,\,\ref{polar2}, is applied to $\left\{W^{(k)}_{n_1,i_0}\right\}_{k=1}^{N_2}$, for $i_0 \in [\![M]\!]$, where the channels are combined according to the index $k$. Let the bit-channels at the $j$-th level of the second stage be denoted by $\left\{W^{(k)}_{n_1+j,i_0}\right\}_{k=1}^{N_2}$. The permutations and the indices for skipped operations follow the extremal process $\left\{y_{j,k}\right\}$ described earlier in the proof for the fixed $i_0$. The polarization stops at level $l$, where $l$ is given in \eq{l-main-def}. One can alternatively assume that the process continues but all the operations for $l <j \leq n_2$ are skipped. To simplify the notation let the bit-channels $W^{(k)}_{n_1+j,i_0}$ be re-indexed by $W_{n_1+j,i_0+N_2(k-1)}$. The polar code is constructed by picking all the bit-channels at the final level $n = n_1+n_2$ with the Bhattacharyya parameter bounded by $P_e/N$. The rate $R$ of the code is given by
\be{R-def}
R = \frac{1}{N}\left\{i: i \in [\![mN_2]\!],\ Z(W_{n,i}) \leq \frac{P_e}{N}\right\}.
\ee
By \Lref{det-lemma1}
$$
Z(W_{j,i}) \leq x_{j,i},
$$
for $j \in [\![n_2]\!]$ and $i \in [\![mN_2]\!]$. Using this together with \eq{R-def} and \eq{main2-12} we have
\be{R-bound}
\begin{split}
R &\geq \frac{1}{N}\Bigl|\left\{i: i \in [\![MN_2]\!], x_{l,i} \leq \frac{P_e}{N} \right\}\Bigr|\\
& \geq \IN - d_3 N^{-\frac{1}{\mu}}.
\end{split}
\ee
Then \eq{R-bound} can be re-written as
$$
N \leq \frac{\kappa}{(\IN - R)^{\mu}},
$$
where
\be{kappa-def}
\kappa\,\deff\,d_3^{\mu}
\ee
and $d_3$ is defined in \eq{d3-def}. Note that $\kappa$ depends only on $\mu$ and $P_e$. Also, the probability of block error, under successive cancellation decoding, is bounded by the sum of Bhattacharyya parameters of the selected bit-channels which is bounded by $P_e$, similar to what is shown for original polar codes \cite{Arikan}. 
\end{proof}
\textbf{Remark 6.\,} The low complexity $O(N \log N)$ architecture of successive cancellation decoding of original polar codes can be adapted for the polar code constructed via a two stage polarization explained in the proof of \Tref{thm-main2}. The decoding trellis can be also separated into two stages to reflect the two stages of polarization. The first of stage of the decoding trellis consists of $N_1$ trellises, each of size $(n_2+1) \times N_2$. These trellises are run in parallel while the permutations $\pi^{(k)}_{j,l}$ are applied accordingly as described in Section\,\ref{sec:three}. Also, the indices for the skipped operations, saved in $T_k$'s, are taken into account and the corresponding LR combination operations are skipped. The second stage of decoding trellis consists of $M$ trellises, where $M$ is given in \eq{m-main-def}, each of size $(n_1+1) \times N_1$. These trellises are run successively while the permutations and skipped operations, derived from the extremal deterministic process as described in the proof of \Tref{thm-main2}, are also applied. When the decoding process in the first stage trellises arrive at the $i$-th selected index in the level $n_1+1$, for $i \in [\![M]\!]$, the $i$-th trellis of the second stage is run. When it is done, the hard decisions are passed to the first stage trellises and they continue to run until they arrive at the $i+1$-th selected index and so on. The complexity of the total decoding process is $O(N \log N)$. 

\begin{theorem}
\label{thm-main3}
Given a non-stationary sequence of independent BMS channels $\left\{W_i\right\}_{i=1}^{\infty}$ and assuming the average capacity
$$
\overline{I}(\left\{W_i\right\}_{i=1}^{\infty}) = \lim_{N\rightarrow \infty} \frac{1}{N}\sum_{i=1}^N I(W_i)
$$
is well-defined, there exists a sequence of polar codes $\left\{\cC_i\right\}_{i=1}^{\infty}$, with length $N_i$, rate $R_i$, and probability of block error $P_i$ under a low complexity $O(N_i \log N_i)$ SC decoder, such that
$$
\lim_{i \rightarrow \infty} R_i = \overline{I}(\left\{W_i\right\}_{i=1}^{\infty}),\ \text{and}\ \lim_{i \rightarrow \infty} P_i = 0,
$$
when the $j$-th bit of $\cC_i$ is transmitted over $W_j$, for $j \in [\![N_i]\!]$.
\end{theorem}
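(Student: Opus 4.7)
The plan is to obtain Theorem \ref{thm-main3} as a direct corollary of Theorem \ref{thm-main2} via a diagonal construction over $(N, P_e)$. Fix any constant $\mu$ satisfying the hypothesis \eq{mu-def}, so that Theorem \ref{thm-main2} applies for every target block error probability $P_e \in (0,1)$.

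Concretely, for each positive integer $i$ I would set $P_e^{(i)} = 1/i$ and let $\kappa_i$ denote the constant $\kappa$ furnished by Theorem \ref{thm-main2} for the pair $(\mu, P_e^{(i)})$. Because the average capacity $\overline{I}(\{W_j\}_{j=1}^{\infty})$ is assumed to exist, $\overline{I}_{N} \to \overline{I}(\{W_j\}_{j=1}^{\infty})$ as $N \to \infty$, so I can pick a strictly increasing sequence of block lengths $N_i = 2^{n_i}$ large enough that both $(\kappa_i / N_i)^{1/\mu} \leq 1/i$ and $|\overline{I}_{N_i} - \overline{I}(\{W_j\}_{j=1}^{\infty})| \leq 1/i$. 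Applying Theorem \ref{thm-main2} to the prefix $\{W_j\}_{j=1}^{N_i}$ with target error $P_e^{(i)}$ produces a polar code $\cC_i$ of length $N_i$, rate $R_i$, and block error probability $P_i \leq 1/i$, decodable in $O(N_i \log N_i)$ time via the two-stage successive-cancellation architecture described in the remark after Theorem \ref{thm-main2}.

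To finish, I would rearrange the finite-length inequality $N_i \leq \kappa_i / (\overline{I}_{N_i} - R_i)^{\mu}$ into $R_i \geq \overline{I}_{N_i} - (\kappa_i / N_i)^{1/\mu} \geq \overline{I}_{N_i} - 1/i \geq \overline{I}(\{W_j\}_{j=1}^{\infty}) - 2/i$, while trivially $R_i \leq \overline{I}_{N_i} \leq \overline{I}(\{W_j\}_{j=1}^{\infty}) + 1/i$. This sandwich forces $R_i \to \overline{I}(\{W_j\}_{j=1}^{\infty})$, and the bound $P_i \leq 1/i$ immediately yields $P_i \to 0$.

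There is no real obstacle here; the only thing worth noting is that $\kappa_i$ itself grows with $i$ because $P_e^{(i)} \to 0$ enters through $\beta = -\log P_e^{(i)} + \alpha$ in \eq{alpha-def}, then through $p$ in \eq{qn-def}, $d_2$ in \eq{d2-def}, and finally $\kappa = d_3^{\mu}$ in \eq{d3-def}. This is harmless because $\kappa_i$ is a finite constant determined before $N_i$ is chosen, so we are free to pick $N_i$ large enough to dominate both $\kappa_i$ and the convergence rate of $\overline{I}_{N}$; there is no circular dependence.
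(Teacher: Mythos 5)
Your proof is correct and takes essentially the same route as the paper: the paper likewise diagonalizes over arbitrary vanishing sequences $\epsilon_i \rightarrow 0$ and $\overline{P}_i \rightarrow 0$, setting $n_i = \left\lceil \log \kappa - \mu \log \epsilon_i \right\rceil$ with $\kappa = \kappa(\overline{P}_i,\mu)$ so that \Tref{thm-main2} yields $R_i \geq \overline{I}_{N_i} - \epsilon_i$ and $P_i \leq \overline{P}_i$, and then passes to the limit via the same sandwich argument (using $R_i \leq \overline{I}_{N_i}$ up to vanishing terms, just as you do). Your explicit observation that $\kappa_i$ is fixed by $(\mu, P_e^{(i)})$ before $N_i$ is chosen, so there is no circular dependence, is precisely the ordering the paper relies on implicitly through its choice of $n_i$.
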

\begin{proof}
Consider two arbitrary sequences $\left\{\epsilon_i\right\}_{i=1}^{\infty}$ and $\left\{\overline{P}_i\right\}_{i=1}^{\infty}$ such that
$$
\lim_{i \rightarrow \infty} \epsilon_i = 0,\ \text{and}\ \lim_{i \rightarrow \infty} \overline{P}_i = 0.
$$
Fix $\mu$ such that 
$$
\mu > 2+\log 3 + \frac{1}{\eta},
$$  
where $\eta$ is given in \eq{eta-def}. For every $i \in \N$, let
\be{ni-def}
n_i = \left \lceil  \log \kappa (\overline{P}_i,\mu) - \mu \log \epsilon_i \right \rceil,
\ee
where $\kappa (\overline{P}_i,\mu)$ is set to be equal to $\kappa$ in \Tref{thm-main2}, specified in \eq{kappa-def}, while setting $P_e = \overline{P}_i$ and using the same value for $\mu$. Note that $\kappa$ in \Tref{thm-main2} depends only on $P_e$ and $\mu$, and does not depend on $N$ and the sequence of channels $\left\{W_i\right\}_{i=1}^{N}$. Hence, $n_i$ can be set as specified in \eq{ni-def} and then by \Tref{thm-main2} we construct a polar code $\cC_i$ of length $N_i = 2^{n_i}$ and rate $R_i$ such that
\be{thm-main31}
R_i \geq  \overline{I}_{N_i}  - \kappa(\overline{P}_i,\mu)^{1/\mu} N_i ^{-1/\mu} \geq \overline{I}_{N_i}  - \epsilon_i,
\ee
where
$$
\overline{I}_{N_i} = \frac{1}{N_i}\sum_{j=1}^{N_i} I(W_j),
$$
and the probability of block error $P_i$ of $\cC_i$ under SC decoder when transmitted over $\left\{W_j\right\}_{j=1}^{N_i}$ is bounded by $\overline{P}_i$. Note that the left inequality in \eq{thm-main31} is by rearranging \eq{eq-main21} and the right inequality in \eq{thm-main31} is by the specific choice of $n_i = \log N_i$ in \eq{ni-def}. Note also that 
$$
\lim_{i \rightarrow \infty} \overline{I}_{N_i} = \overline{I}(\left\{W_i\right\}_{i=1}^{\infty}),
$$
and 
$$
\lim_{i \rightarrow \infty} (\overline{I}_{N_i} - \epsilon_i) = \overline{I}(\left\{W_i\right\}_{i=1}^{\infty}) - \lim_{i \rightarrow \infty} \epsilon_i = \overline{I}(\left\{W_i\right\}_{i=1}^{\infty}).
$$
Therefore, by \eq{thm-main31} and since $R_i < \overline{I}_{N_i}$ we have 
$$
\lim_{i \rightarrow \infty} R_i = \overline{I}(\left\{W_i\right\}_{i=1}^{\infty}).
$$
Also, since $P_i \leq \overline{P}_i$, we get
$$
\lim_{i \rightarrow \infty} P_i = 0,
$$
which completes the proof. 
\end{proof}

\section{Numerical Evaluation of the Bounds and Simulation Results}
\label{sec:new} 

In this section we discuss the numerical evaluation of the bounds derived on the speed of polarization and the scaling exponent of constructed polar codes. Also, simulation results are provided for non-stationary binary-input additive white Gaussian noise (BAWGN) channels. 

Recall the fine quantization described in Definition\,\ref{quant-def}. Note that the total number of sub-intervals in this quantization is $O(\log N)$ as long as $c$, $\lambda$, and $\tau$ are constant positive numbers. More specifically, the total number of sub-intervals is $c_1\log N + c_2$, where $c_1$ and $c_2$ depend on $c$, $\lambda$, and $\tau$. 

Let the function $h^*(z)$ for $z \in (0,1)$ be defined as follows:
\be{hstar-def}
h^*(z)\,\deff\, g(z,z),
\ee
where $g$ is defined in \eq{g-def}. And let 
\be{etastar-def}
\eta^*\,\deff\, -\log \sup_{z \in (0,1)} h^*(z).
\ee
Note that the value of $\eta$, defined in \eq{eta-def}, depends on the specific choice of $c$ and $\lambda$ which govern the definition of $h(.)$ in \eq{h-def}. In the next lemma, it is shown that $\eta \rightarrow \eta^*$, as $c,\lambda \rightarrow 0$, under certain conditions that are numerically verified later. 

\begin{lemma}
\label{etalem}
Let $\eta$ and $\eta^*$ be as defined in \eq{eta-def} and  \eq{etastar-def}, respectively. Suppose that the following three conditions hold:
\begin{enumerate}
\item $\eta^* = -\log h^*(z^*)$ for some $z^* \in (0,1)$,
\item $\lim_{z \rightarrow 0} \sup_{z' \in [z,2z]} g(z,z') < 2^{-\eta^*}$,
\item $\lim_{z \rightarrow 1}  \sup_{z' \in [z,(1+z)/2]} g(z,z') < 2^{-\eta^*}$,
\end{enumerate}
where $g$ is defined in \eq{g-def}. Then we have
\be{etalim}
\lim_{c,\lambda \rightarrow 0} \eta = \eta^*. 
\ee
\end{lemma}
\begin{proof}
Note that $g(z_1,z_2)$ is a continuous function with respect to both $z_1$ and $z_2$. Hence, for any $z \in [c,1-c]$, 
\be{etalem1}
\lim_{\lambda \rightarrow 0} \sup_{z' \in [z,\min(z+\lambda,1-c)]} g(z,z') = g(z,z) = h^*(z), 
\ee
where $h^*(.)$ is defined in \eq{hstar-def}. Given the three conditions in the statement of the lemma, we have $c < z^* < 1-c$ for small enough $c$ and
\be{etalem2}
\sup_{z' \in [z,2z]} g(z,z') < 2^{-\eta^*}
\ee
for $z \in (0,c)$, and
\be{etalem3}
 \sup_{z' \in [z,(1+z)/2]} g(z,z') < 2^{-\eta^*}
\ee
for $z\in (1-c,1)$. By definition of $h(z)$ in \eq{h-def} for $z \in (0,c)$ and $z \in (1-c,1)$ together with \eq{etalem2} and \eq{etalem3} we have
\be{etalem4}
\sup_{z \in (0,c) \cup (1-c,1)} h(z) < 2^{-\eta^*}. 
\ee
Also, by \eq{etalem1} and definition of $h(.)$ in \eq{h-def} for $z\in [c,1-c]$ we have
$$
\lim_{\lambda \rightarrow 0} h(z) = g(z,z) = h^*(z) \leq h^*(z^*) = 2^{-\eta^*}. 
$$
Also, the equality occurs for $z = z^*$. This together with \eq{etalem4} and the definition of $\eta$ in \eq{eta-def} complete the proof. 
\end{proof}

Note that the result of \Tref{thm-main} is still valid if we set very small values for $c$ and $\lambda$. In fact, the value of the constant $c_{\rho}$ depends also on $c$ and $\lambda$ but these parameters are treated as constants in the statement of \Tref{thm-main}. In stationary scenarios, $\eta^*$ is shown to be a bound on the speed of polarization (see \cite[Section IV]{hassani}). However, roughly speaking, \Tref{thm-main} together with \Lref{etalem} suggest the bound $\frac{\eta^*}{\eta^*+1}$ on the average speed of polarization $\overline{\eta}_n$ defined in \eq{speed-def2} for non-stationary scenarios. This will be elaborated more in the next subsection. The difference between the bounds in stationary and non-stationary scenarios is mainly due to the following, rigorously shown in the proof of \Tref{thm-main}: the number of pairs of synthetic channels whose $\Delta_f(.,.)$, defined in \eq{polar-rate}, is not bounded away from $1$ is not a constant, in a worst-case scenario, as the polarization process evolves. While this may seem an artifact of the specific proof technique used in \Tref{thm-main}, numerical examples for non-stationary BECs, in the next subsection, suggest that the bound of  \Tref{thm-main} may be very tight in these scenarios provided that the channels are initially sorted.

\subsection{Numerical results for non-stationary BECs}

\label{sec:newA} 

Consider a non-stationary sequence of channels $\{W_i\}_{i=1}^N$, where $N=2^n$ and all $W_i$'s are BECs. Also, suppose that the \textit{deterministic polarization 1}, defined in Definition\,\ref{polar1}, is applied to this sequence. It is shown in the following lemma that in order to ensure that the erasure probabilities of the synthetic channels within each sub-block of length $2^{n-j}$ in the $j$-th polarization level are in a non-increasing order, it is sufficient to initially sort $W_i$'s before applying the polarization transform. 
\begin{lemma}
\label{permbec}
Suppose that the erasure probabilities of BECs $W_i = \text{BEC}(p_i)$, where $p_i \in (0,1)$, are in a non-increasing order, i.e., $p_1 \geq p_2 ... \geq p_N$. Also, suppose that the polarization transform, as described in \eq{ch-comb} and without any modification, is applied to this sequence of channels. Then in each polarization level $j$, for $j \in [\![n]\!]$, the erasure probabilities of synthetic channels $W_{j,i}$, for $ i = (l-1)2^{n-j}+1, (l-1)2^{n-j}+2,\dots, l2^{n-j}$ for any $l \in  [\![2^j]\!] $, are in a non-increasing order. 
\end{lemma}
\begin{proof}
Note that if $1 > p_1 \geq p_2 \geq p_3 \geq p_4 > 0$, then we have $p_1 p_2 \geq p_3 p_4$
and
\begin{align*}
p_1 + p_2 - p_1 p_2 &= 1 - (1-p_1)(1-p_2)\\
& \geq 1 - (1-p_3)(1-p_4) = p_3 + p_4 - p_3 p_4.
\end{align*}
These together with the equations for combining operations of BECs described in Section\,\ref{sec:two} imply that the erasure probabilities of $W_{1,i}$'s, for $i \in  [\![N/2]\!]$, are in a non-increasing order. Similarly, the erasure probabilities of $W_{1,i}$'s, for $i \in  [\![N]\!] \setminus [\![N/2]\!]$, are also in non-increasing order. The proof is then complete by induction on $j$ and the recursive nature of polarization transform, described by \eq{ch-comb}. 
\end{proof}

In the case of non-stationary BECs, by using the equations for combining operations of BECs described in Section\,\ref{sec:two}, the definition of $g(.,.)$ in \eq{g-def} is modified as follows:
\be{g-def2}
g(z_1,z_2)\,\deff\, \frac{f(z_1z_2)+f(z_1 + z_2 - z_1 z_2)}{f(z_1)+f(z_2)},
\ee
where $f(.)$ is defined in \eq{f-def} for a parameter $b$ that we are going to select. Since $f(.)$ is a concave function, for any $z_1,z_2 \in (0,1)$, we have $g(z_1,z_2) < 1$. Hence, the condition specified in \eq{comb-cond} is always satisfied. 

Now, by following \eq{hstar-def} we have
\be{hstar-def2}
h^*(z) = \frac{f(z^2)+f(2z - z^2)}{2f(z)}.
\ee
Then the parameter $b$, that is used to specify $f$(.) in \eq{f-def}, is picked such that $\eta^*$, defined in \eq{etastar-def}, is maximized. More specifically, $b = \frac{2}{3}$ is picked, which is similar to what is picked for stationary BECs studied in \cite{hassani}. With $b=\frac{2}{3}$ the value of $\eta^*$ can be numerically estimated as follows:
\be{etastar-BEC}
\eta^* \approx 0.2669.
\ee
This closely matches the value provided in \cite[Table III]{hassani} corresponding to the value $m=0$ and the small discrepancy could be due to different numerical estimations. 

The conditions in \Lref{etalem} can be numerically verified. In particular, for $z^*= 0.158, 0.842$ we have $h^*(z^*) \approx 2^{-\eta^*}$, where $\eta^*$ is given by \eq{etastar-BEC}. Also, it can be observed that 
\begin{align*}
\lim_{z \rightarrow 0} \sup_{z' \in [z,2z]} g(z,z') &=  \lim_{z \rightarrow 0} g(z,2z) \\ 
&= \frac{3^b}{1+2^b} \approx 0.8039 \\
& < 2^{-\eta^*} \approx 0.8311.
\end{align*}
Also, similarly, we have   
$$
\lim_{z \rightarrow 1}  \sup_{z' \in [z,(1+z)/2]} g(z,z') \approx 0.8039 < 2^{-\eta^*}.
$$
Hence, the three conditions in \Lref{etalem} hold. Consequently, \Lref{etalem} implies that for any $\epsilon > 0$, one can pick small enough parameters $c$ and $\lambda$ in the fine quantization, defined in Definition\,\ref{quant-def}, such that we have 
$$
\eta > \eta^* - \epsilon,
$$
where $\eta$ is defined in \eq{eta-def} corresponding to the function $h(.)$ defined in \eq{h-def} given the particular $g(.,.)$ in \eq{g-def2} and $\eta^* \approx 0.2669$. Hence, by \Tref{thm-main}, we have the numerical lower bound 
$$
\overline{\eta}_n > \frac{\eta^*}{\eta^*+1} - \epsilon,
$$
on the average speed of polarization $\overline{\eta}_n$, defined in \eq{speed-def2}, as $n$ grows large. For instance, $0.2106$ is a numerical lower bound on the average speed of polarization. 

In order to see how tight the lower bound is, a numerical example is considered. Let $N = 2^{20}$, $n = 20$, and the erasure probabilities of $\{W_i\}_{i=1}^N$, already in a decreasing order, are elements of an arithmetic sequence starting at $0.99$ with the common difference $-0.98/N$, i.e., the last element and the average capacity are roughly $0.01$ and $0.5$, respectively. As mentioned earlier, the condition specified in \eq{comb-cond} is always satisfied. Also, the parameter $c$ can be picked arbitrarily small such that \eq{comb-cond1} is always satisfied in this numerical example. Hence, no operation is skipped in the \textit{deterministic polarization} 2, defined in Definition\,\ref{polar2}, applied to this example. Also, by \Lref{permbec} and since the erasure probabilities are initially in a decreasing order, there is no need for further sorting the bit-channels at each polarization level. Consequently, in this numerical example, \textit{deterministic polarization} 2 reduces to Ar{\i}kan's polarization transform without any modifications. 

\begin{figure}
\centering
\includegraphics[width=\linewidth]{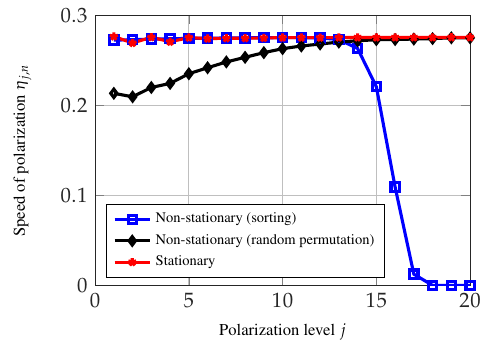}
\caption{Speed of polarization at level $j$, defined in \eq{speed-def1}, as a function of $j$}
\label{plot1}
\end{figure} 

The speed of polarization at level $j$ of the polarization, denoted by $\eta_{j,n}$ and defined in \eq{speed-def1}, is plotted in Figure\,\ref{plot1} as a function of $j$, for $j=1,2,\dots,20$, for the following three scenarios: for the given non-stationary sequence with erasure probabilities being already in a decreasing order (the first scenario), the same non-stationary sequence which is randomly permuted before the polarization transform is applied (the second scenario), and a stationary scenario where the erasure probability of all $N$ channels is $0.5$ (the stationary scenario). It can be observed that the speed of polarization for the first scenario initially follows that of the stationary scenario, perhaps thanks to the decreasing order, but at $n=14$ it starts to have a sharp transition to $0$. This actually aligns with how the proof of \Tref{thm-main} works, where the average speed of polarization at stage $k = \left\lceil n/(\eta+1)\right\rceil$ is bounded and then we simply say that $E_{n,n} \leq E_{k,n}$, where $E_{j,n}$'s are defined in \eq{En-def}. The reason for this step in the proof of \Tref{thm-main}, as opposed to following how $E_{j,n}$ evolves for $j=k+1,\dots,n$, is that for $j > k$ we have so many pairs of channels belonging to different sub-intervals, in the worst-case scenario, that polarization beyond level $k$ is simply discarded in the proof. More specifically, if one follows \eq{eq-thm5}-\eq{eq-thm2} for $k > \left\lceil n/(\eta+1)\right\rceil$, then the term $s2^{-n+k}$ in \eq{eq-thm2} becomes more significant compared with $2^{-\rho n}$, which adversely affects the resulting bound on the average speed of polarization. Hence, we instead invoke \eq{eq-thm2} for $k = \left\lceil n/(\eta+1)\right\rceil$ and then upper bound $E_{n,n}$ by $E_{k,n}$. The numerical example in this section suggests that upper bounding $E_{n,n}$ by $E_{k,n}$ may not be loose even though channel combining operations are not skipped in the polarization process. In fact, $\left\lceil n/(\eta+1)\right\rceil = 16$ in our example and $j=16$ falls into the sharp transition of the speed of polarization to $0$. In the second scenario, the speed of polarization is initially below that of the first and the stationary scenario, as expected. However, as $j$ increases, it reaches that of the stationary scenario. This can be interpreted as follows. Since a random permutation is applied, as $j$ increases, the synthetic channels within sub-blocks of length $2^{n-j}$ are becoming more and more \textit{stationary-like}, i.e., there is smaller and smaller variation in their erasure probabilities. Hence, the speed of polarization follows that of the stationary scenario for larger $j$'s. 

Also, in this numerical example, the average of $\eta_{j,n}$'s, for $j=1,2,\dots,n$ where $n=20$, is $0.2087$ in the first scenario, which is, interestingly, very close to the numerical lower bound $0.2106$ that we have proved on the average speed of polarization as $n$ grows large. This suggests that our approach may lead to numerically tight lower bounds on the average speed of polarization for non-stationary BECs assuming that the BECs are initially sorted. The average of $\eta_{j,n}$'s in the stationary scenario is $0.2749$. As expected, this is slightly less than $0.2757 = 1/3.627$, where $3.627$ is the parameter for stationary BECs that is numerically approximated in \cite{hassani}. The average of $\eta_{j,n}$'s in the second scenario is $0.2545$, which is larger than that of the first scenario but smaller than that of the stationary scenario. 

In general, it is expected that, as shown in the numerical results for the second scenario, with a random permutation the speed of polarization $\eta_{j,n}$ converges to that of the stationary scenario as $j$, the polarization level, increases up to $n$. However, the average speed of polarization might be still lower than that of the stationary scenario. Also, a scheme with a random permutation does not immediately lead to explicit constructions, i.e., one has to pick a permutation from a random ensemble for theoretical guarantees. For practical purposes, it is an interesting problem to study how to optimize the selection of the permutation that is initially applied to the non-stationary sequence of channels. This problem has been studied for certain numerical examples in \cite{zorgui2019non}. 

Regarding the finite-length scaling exponent of the capacity-achieving codes, constructed in Section,\ref{sec:five}, \eq{mu-def} implies that 
$$
\mu > 2+\log 3 + \frac{1}{\eta^*} 
$$
is sufficient for \Tref{thm-main2} to hold, where $\eta^*$ is given in \eq{etastar-BEC}. This suggests the numerical upper bound of $7.34$ on the scaling exponent of the constructed capacity-achieving codes for non-stationary BECs.  

\subsection{Numerical results for general non-stationary BMS channels}
\label{sec:newB}

Consider a non-stationary sequence of BMS channels $\{W_i\}_{i=1}^N$, where $N=2^n$. Also, suppose that the \textit{deterministic polarization 2}, defined in Definition\,\ref{polar2}, is applied to this sequence. \textit{Proper} permutations, as discussed in Section\,\ref{sec:three}, are applied at each polarization level to ensure that the Bhattacharyya parameters of synthetic channels are in a non-increasing order within sub-blocks of length $2^{n-j}$ at polarization level $j$. Also, channel combining operations are performed if \eq{comb-cond1} and \eq{comb-cond} are both satisfied. 

In light of \Lref{etalem}, in order to numerically estimate a bound on the average speed of polarization one needs to numerically compute $\eta^*$ in \eq{etastar-def}. A larger $\eta^*$ would imply a better bound on the average speed of polarization. Hence, the parameter $b$ in the definition of $f(.)$ in \eq{f-def} needs to be optimized. Note that one can do better by adopting the strategy proposed in \cite{hassani}. More specifically, the function
\be{fnew-def}
f(z) = \frac{1}{20}(8z^2+5z+19)z^{\frac{3}{4}}(1-z)^{\frac{3}{4}}
\ee
is chosen, which leads to $\eta^* \approx 0.202$ \cite[Section IV-B]{hassani}. 

The conditions in \Lref{etalem} can be numerically verified. In particular, for $z^* = 0.178$ we have $h^*(z^*) \approx 2^{-\eta^*}$, where $\eta^* \approx 0.202$. Also, it can be observed that 
\begin{align*}
\lim_{z \rightarrow 0} \sup_{z' \in [z,2z]} g(z,z') &=  \lim_{z \rightarrow 0} g(z,2z) \\ 
&= \frac{3^{3/4}}{1+2^{3/4}} \approx 0.85 \\
& < 2^{-\eta^*} \approx 0.869.
\end{align*}
Also, similarly, we have   
$$
\lim_{z \rightarrow 1}  \sup_{z' \in [z,(1+z)/2]} g(z,z') \approx 0.85 < 2^{-\eta^*}.
$$
Hence, the three conditions in \Lref{etalem} hold. Consequently, \Lref{etalem} implies that for any $\epsilon > 0$, one can pick small enough parameters $c$ and $\lambda$ in the fine quantization, defined in Definition\,\ref{quant-def}, such that we have 
$$
\eta > \eta^* - \epsilon.
$$
This value of $\eta$ can be then used in \Tref{thm-main}, \Tref{thm-main2}, and \Tref{thm-main3}. One small difference, comparing to the case when a function $f(.)$ of the type specified in \eq{f-def} is selected, is that the value of $c_{\rho}$ in \Tref{thm-main} is slightly increased. More specifically, with the choice of $f(.)$ in \eq{fnew-def}, we have
\be{fnew-cond} 
f(z) < \frac{8}{5}z^{\frac{3}{4}}(1-z)^{\frac{3}{4}},
\ee
for $z \in (0,1)$. Then in the proof of \Tref{thm-main}, $b = \frac{3}{4}$ is used. Also, the right-hand side of \eq{dj-cond} is multiplied by a factor of $\frac{8}{5}$. This together with \eq{fnew-cond} ensure that \eq{dj-cond} still holds. Consequently, \eq{eq-thm2} is slightly modified and then the argument of $\log(.)$ in \eq{crho-def} is increased by an additional factor of $\frac{3}{5}$, which implies that $c_{\rho}$ is slightly increased.  

To summarize, \eq{mu-def} implies that 
\be{mu-bound}
\mu > 2+\log 3 + \frac{1}{\eta^*} 
\ee
is sufficient for \Tref{thm-main2} to hold. Hence, we obtain $8.54$ as a numerical upper bound on the scaling exponent of polar codes constructed for general non-stationary BMS channels.

Note that exactly characterizing the scaling exponent of polar codes, constructed from the $2 \times 2$ kernel, for general BMS channels is still open in the stationary case. In fact, it is not known whether the methods introduced in \cite{hassani}, and developed further in \cite{GB,MUH}, to study the speed of polarization and the scaling exponent of the constructed polar codes, which we also adopt here to some extent, are adequate to precisely characterize these parameters for general BMS channels or not. 

\subsection{Simulation results for non-stationary BAWGN channels}
\label{sec:newC} 

In this section Monte-Carlo simulation results for non-stationary BAWGN channels are provided and the results are compared with stationary BAWGNs. 

The input to BAWGN$(\sigma)$ is $x \in \{-1,1\}$, and the output is $x+n$, where $n \sim \cN(0,\sigma^2)$.  By convention, the value of signal-to-noise ratio (SNR) in dB scale for a BAWGN$(\sigma)$ is defined as $10 \log _{10} 2\sigma^2$. The sequences of non-stationary BAWGN channels to be simulated in this section are defined with respect to their SNRs, where SNRs follow an arithmetic sequence. Let $N=2^{10}$, where $N$ is the number of channels in the sequence and is also equal to the code block length. The common difference in each considered arithmetic sequence is $1/N$. For instance, one considered sequence of non-stationary BAWGNs is given by the following SNRs: $\{-2+\frac{i}{N}\}_{i=1}^N$ in dB scale. Each such sequence is then represented by its \textit{effective average} SNR, defined to be equal to the SNR of the BAWGN channel whose capacity is equal to the average of the capacities of the BAWGN channels in the given sequence (this, in fact, is very close to the middle term of the considered arithmetic sequences, e.g., the  \textit{effective average} SNR of the sequence of SNRs $\{-2+\frac{i}{N}\}_{i=1}^N$ in dB scale is approximately $-1.5$\,dB). This provides a fair way of comparing the performance of polar codes in non-stationary and stationary scenarios when plotting the performance, in terms of the block error rate versus the SNR. For ease of description, the \textit{effective average} SNR in non-stationary scenarios and SNR in stationary scenarios are both simply referred to as SNR. 

\begin{figure}
\centering
\includegraphics[width=\linewidth]{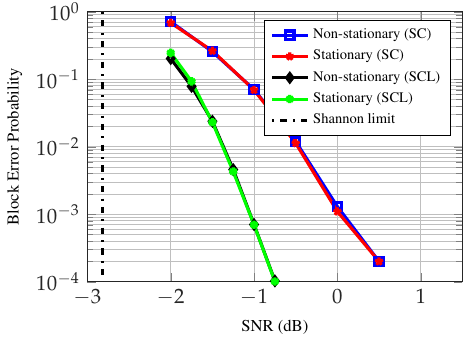}
\caption{Performance comparison at the block length $N=1024$ and the rate $\shalf$. The SCL is with maximum list size of $16$ and a CRC of length $16$. The Shannon limit is $-2.823$\,dB, i.e., the capacity of a BAWGN channel with SNR $=-2.823$\,dB is $\shalf$.}
\label{plot2}
\end{figure} 

In order to construct polar codes, Tal-Vardy construction method is considered \cite{TV2}. In particular, the method proposed in \cite{TV2} based on degrading quantization, while combining synthetic channels, is adopted in both stationary and non-stationary scenarios. The main purpose of degrading quantization is to ensure a bounded output alphabet size for synthetic channels \cite{TV2}. Note that degrading quantization is an operation that is done on a single BMS channel in order to reduce its output alphabet size while limiting the capacity loss. Hence, it can be naturally adopted in the non-stationary polarization transform as well. The quantization parameter is set to $16$. Also, the continuous output alphabet of BAWGNs is initially quantized, using the method proposed in \cite[Section VI]{TV2}, and BMS channels with discrete output alphabet of size $10^3$ are obtained. Polar codes of rate $\shalf$ are constructed in both stationary and non-stationary scenarios at SNR$=-1$\,dB, where the considered non-stationary sequence corresponds to an arithmetic sequence, as described before. The total average capacity loss after $10$ levels of polarization, as a result of channel quantization and degrading quantization, is less than $0.002$. For the non-stationary scenario the \textit{deterministic polarization 2}, defined in Definition\,\ref{polar2}, is adopted. The condition in \eq{comb-cond}, with $f(.)$ given by \eq{fnew-def}, is slightly relaxed by adding $10^{-4}$ to the right hand side of \eq{comb-cond} to compensate the effect of quantization degradation. It is observed that this condition is always satisfied for all pairs of the synthetic channels in all $10$ levels of polarization. Also, $c$ can be picked arbitrarily small in \eq{comb-cond1}. As a result, no channel combining operation is skipped. Also, although the channels are initially sorted, permutations need to be applied in each polarization level, as opposed to the non-stationary BEC case discussed in \Lref{permbec}, to ensure that the Bhattacharyya parameters in each sub-block of length $2^{n-j}$ at level $j$ are in a non-increasing order. 

The performances of the constructed polar codes of rate $\shalf$, as described above, are evaluated over a range of SNRs and the results are shown in Figure\,\ref{plot2}. Two types of decoders are considered: the successive cancellation (SC) method proposed by Ar{\i}kan \cite{Arikan} and the successive cancellation list (SCL) decoding with the help of cyclic redundancy check (CRC) bits proposed in \cite{TV}. A maximum list size of $16$ is assumed and a CRC of length $16$ is considered in the SCL decoding. In order to have an effective code rate of $\shalf$, $512+16 = 528$ best synthetic channels, in terms of their degraded probability of error estimated by Tal-Vardy construction method, at the last polarization level are selected to carry the information bits. It is observed that the performance in the two cases are very close with a negligible difference. 

The simulation results in this section may seem to oppose the conclusion in section\,\ref{sec:newA} about slower speed of polarization for non-stationary BECs comparing to stationary BECs. However, we argue that this might be due to various reasons. The variation in the capacity of the non-stationary channels considered in section\,\ref{sec:newA} is from $0.01$ to $0.99$. Roughly speaking, this can be considered as an \textit{extreme} case in the non-stationary scenario. Intuitively, it is expected that with more variation in the capacity of channels in the non-stationary sequence, a smaller speed of polarization is obtained. The variation in the capacity of BAWGN channels in this section, for instance, for the channels with SNRs $\{-2+\frac{i}{N}\}_{i=1}^N$ in dB scale is from $0.564$ to $0.643$. In order to have a variation in the capacity from $0$ to $1$, as in BECs, for BAWGN channels, one has to assume SNRs varying from $-\infty$ to $\infty$, a situation which does not seem practical. Another reason is that a very large $N$, i.e., $N=2^{20}$, is considered in section\,\ref{sec:newA} in order to illustrate the difference in the speed of polarization between the stationary and the non-stationary scenarios. However, the simulation results in this section suggest that at moderate block lengths of practical interest, e.g., $N=1024$, and with a \textit{reasonable} variation in the capacity of channels in the non-stationary sequence, polar codes with the same rate and a very similar performance, comparing to polar codes for stationary scenarios, can be constructed.

\section{Discussions and Conclusion}
\label{sec:six}
In this paper we considered the problem of polar coding for non-stationary sequences of channels where the channels are independent but not identical. Several modifications to Ar{\i}kan's channel polarization transform are proposed. The end result is that an arbitrary sequence of non-stationary channels can be polarized while lower bounding the speed of polarization. Also, the resulting fast polarization scheme is combined with another polarization stage in order to construct polar coding schemes that achieve the average capacity of non-stationary channels. Furthermore, it is shown that the block length $N$ of the constructed code is upper bounded by a polynomial of the inverse of the gap to the average capacity. Several numerical and simulation results are provided at finite block lengths. 

There are several directions for future research. Improving the upper bound on the finite-length scaling of the constructed polar codes is an interesting problem. For instance, if one can avoid the skipped channel combining operations using an alternative technique, that can immediately improve the upper bound. Also, it remains open whether any of the modifications proposed in this paper, i.e., sorting the bit-channels and skipping operations, are actually necessary in order to construct explicit capacity-achieving polar codes with bounded scaling exponent. In other words, the question is the following: Does there exist a non-stationary sequence of channels for which fast polarization does not occur when applying Ar{\i}kan's polarization transform without any modification? Furthermore, studying the error exponent of polar codes over non-stationary channels, as an extension of \cite{AT} to the non-stationary case, and unified scaling laws for non-stationary channels, as an extension of the results of \cite{MUH} in the stationary case, are other interesting problems.

\bibliographystyle{IEEEtran}
\bibliography{polar}

\section*{Appendix}

\begin{lemma}
\label{Z--lemma}
For any two BMS channels $W_1$ and $W_2$, we have
\be{eqZ3}
Z(W_1 \boxcoasterisk W_2) \geq \sqrt{z_1^2 + z_2^2 - z_1^2 z_2^2},
\ee
where $z_1 = Z(W_1)$ and $z_2 = Z(W_2)$.
\end{lemma}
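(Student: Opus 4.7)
The plan is to reduce the inequality to an instance of Jensen's inequality for a convex norm. I would first introduce, for each BMS channel $W_i$, the output distribution $P_i(y) = \frac{1}{2}(W_i(y|0)+W_i(y|1))$ and the signed ``asymmetry'' $\xi_i(y) = (W_i(y|0)-W_i(y|1))/(W_i(y|0)+W_i(y|1)) \in [-1,1]$, so that $W_i(y|x) = P_i(y)\bigl(1+(-1)^x \xi_i(y)\bigr)$. A direct substitution in \eq{channel-com1} shows that $W_1 \boxcoasterisk W_2$ has exactly the same product form, with output measure $P_1 \otimes P_2$ and asymmetry $\xi_1(y_1)\xi_2(y_2)$. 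Consequently,
\begin{equation*}
Z(W_i) = E_{P_i}\!\bigl[\sqrt{1-\xi_i^2}\bigr], \qquad Z(W_1 \boxcoasterisk W_2) = E_{P_1 \otimes P_2}\!\bigl[\sqrt{1-\xi_1^2 \xi_2^2}\bigr],
\end{equation*}
so the claim reduces to a lower bound on an expectation of a function of two independent random variables.

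The key algebraic identity I would exploit next is $1-\xi_1^2\xi_2^2 = (1-\xi_1^2) + \xi_1^2(1-\xi_2^2)$, which lets $\sqrt{1-\xi_1^2 \xi_2^2}$ be written as the Euclidean norm $\|(A,B)\|_2$ of the two-vector with $A = \sqrt{1-\xi_1^2}$ and $B = |\xi_1|\sqrt{1-\xi_2^2}$. Because the Euclidean norm is convex, Jensen's inequality applied with $\xi_1$ held fixed yields
\begin{equation*}
E_{P_2}\!\bigl[\sqrt{1-\xi_1^2 \xi_2^2}\bigr] \;\geq\; \sqrt{(1-\xi_1^2) + \xi_1^2 z_2^2},
\end{equation*}
since only the second coordinate depends on $\xi_2$ and its mean is $|\xi_1| z_2$ by independence.

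Finally, I would rewrite this bound as $\sqrt{z_2^2 + (1-z_2^2)(1-\xi_1^2)}$ and recognize it once more as the $L^2$ norm of a two-vector, this time with $z_2$ deterministic and second coordinate $\sqrt{1-z_2^2}\sqrt{1-\xi_1^2}$ random in $\xi_1$. A second application of Jensen, now against $P_1$, then bounds the result by $\sqrt{z_2^2 + (1-z_2^2)z_1^2} = \sqrt{z_1^2 + z_2^2 - z_1^2 z_2^2}$; chaining with the previous step via Fubini closes the argument. The one step that I expect to feel non-obvious in advance is spotting the splitting $1-\xi_1^2\xi_2^2 = (1-\xi_1^2) + \xi_1^2(1-\xi_2^2)$, which is precisely what casts the integrand into a form where coordinatewise Jensen produces exactly the target expression; once that decomposition is in hand, the two convexity applications are mechanical and tight at BSC's (where $|\xi_i|$ is constant), matching the equality case observed in the literature.
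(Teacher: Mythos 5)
Your proof is correct and, despite the different packaging, is essentially the paper's own argument made fully explicit: your representation $W_i(y|x)=P_i(y)\bigl(1+(-1)^x\xi_i(y)\bigr)$ is precisely the decomposition of a BMS channel into a convex combination of BSCs that the paper invokes (the component with $|\xi_i|=\delta$ is a BSC with Bhattacharyya parameter $\sqrt{1-\delta^2}$), your splitting identity $1-\xi_1^2\xi_2^2=(1-\xi_1^2)+\xi_1^2(1-\xi_2^2)$ is exactly the paper's asserted equality case for two BSCs, and your two Jensen applications are exactly the paper's claim that $\sqrt{z_1^2+z_2^2-z_1^2z_2^2}$ is convex in each argument with the other held fixed. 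The one substantive difference is completeness rather than method: the paper leaves implicit that $Z(W_1\boxcoasterisk W_2)$ averages correctly over pairs of BSC components, which your product-form computation (output measure $P_1\otimes P_2$, asymmetry $\xi_1\xi_2$) verifies cleanly, so your writeup is a self-contained rendering of the same proof.
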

\begin{proof}
It can be shown that \eq{eqZ3} turns to equality when both $W_1$ and $W_2$ are binary symmetric channels (BSC). It is well-known that any BMS channels can be decomposed into a convex combination of multiple BSCs. Also, note that $\sqrt{z_1^2 + z_2^2 - z_1^2 z_2^2}$ is convex in both $z_1$ and $z_2$ when the other parameter is fixed. Therefore, the lemma follows.
\end{proof}

\begin{lemma}
\label{IZ-lemma}
For any BMS channel $W$, we have
$$
I(W) \leq 1 - Z(W)^2.
$$
\end{lemma}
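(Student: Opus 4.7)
The plan is to reduce the inequality to binary symmetric channels (BSCs) via the standard BSC decomposition of a BMS channel, and then combine a one-variable inequality with Jensen. Using the binary-input symmetry of $W$, the output alphabet can be partitioned into equivalence classes (for instance by $|$LLR$(y)|$) on which the conditional channel is a BSC, so $W$ can be written as a $\mu$-mixture of BSCs $\{W_\alpha\}$ with crossover parameters $p_\alpha \in [0, 1/2]$. Under this decomposition, both $I$ and $Z$ act linearly: $I(W) = \int I(W_\alpha)\, d\mu(\alpha)$ and $Z(W) = \int Z(W_\alpha)\, d\mu(\alpha)$, the latter being immediate from the definition of $Z$ as a sum of square-roots.

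Once reduced to the BSC case, each $W_\alpha$ satisfies $I(W_\alpha) = 1 - h(p_\alpha)$ and $Z(W_\alpha) = 2\sqrt{p_\alpha(1-p_\alpha)}$, so $Z(W_\alpha)^2 = 4\, p_\alpha\, (1-p_\alpha)$. The BSC version of the lemma therefore specializes to the classical scalar inequality
\[ h(p) \,\geq\, 4\,p\,(1-p), \qquad p \in [0,1], \]
which is a well-known calculus fact: both sides vanish at $p \in \{0,1\}$, both equal $1$ at $p = 1/2$, and elementary derivative analysis (together with the symmetry $p \leftrightarrow 1-p$) confirms the difference remains nonnegative on the whole interval.

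With the pointwise bound secured, the lemma follows from one integration and one application of Jensen's inequality. Integrating $h(p_\alpha) \geq Z(W_\alpha)^2$ against $\mu$ and invoking linearity of $I$ under the decomposition yields
\[ 1 - I(W) \,=\, \int h(p_\alpha)\, d\mu(\alpha) \,\geq\, \int Z(W_\alpha)^2\, d\mu(\alpha) \,\geq\, \Bigl(\int Z(W_\alpha)\, d\mu(\alpha)\Bigr)^{\!2} \,=\, Z(W)^2, \]
where the second inequality is Jensen applied to the convex function $x \mapsto x^2$, and the final equality is the linearity of $Z$ under the BSC decomposition. Rearranging yields $I(W) \leq 1 - Z(W)^2$.

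The only step requiring any real care is the BSC decomposition itself and the resulting additivity of $I$ and $Z$ under it; this is a classical consequence of binary-input symmetry (and is implicit already in the appeal to a BSC decomposition made in the proof of \Lref{Z--lemma} just above), but it is the one piece of infrastructure the argument relies on beyond the scalar inequality and one application of Jensen. Everything else is mechanical.
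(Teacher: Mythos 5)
Your proposal is correct and takes essentially the same route as the paper: the paper likewise decomposes the BMS channel into a convex combination of BSCs, uses concavity of $-z^2$ (your Jensen step on $z\mapsto z^2$) to reduce to the BSC case, and establishes the same scalar inequality $-p\log p-(1-p)\log(1-p)\geq 4p(1-p)$ via the derivative analysis you only sketch. No changes needed.
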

\begin{proof}
We use the fact any BMS channel can be decomposed into a convex combination of multiple BSCs, same as in the proof of \Lref{Z--lemma}. Since $-z^2$ is a concave function, it is then sufficient to prove the lemma for a BSC $W$. If $W$ is BSC$(p)$, then we have
$$
I(W) = 1+p\log p + (1-p)\log (1-p),
$$
and
$$
Z(W) = 2\sqrt{p(1-p)}.
$$
Therefore, it suffices to show that
\be{IZ-lemma1}
g(p) \,\deff\, -p\log p - (1-p)\log (1-p) -4p(1-p) \geq 0. 
\ee
Since $g(p) = g(1-p)$, it is enough to show \eq{IZ-lemma1} holds for $p \in (0,0.5]$ (for $p=0$, $g(p) = 0$). Note that
$$
g'(p) = \log \frac{1-p}{p} +8p-4,
$$
and
$$
g''(p) = -\frac{\log e}{p(1-p)} + 8,
$$
where $g'$ and $g''$ are also continuous over $(0,0.5]$. Since $g''(p)$ is increasing over $(0,0.5]$, it is zero for at most one $p \in (0,0.5]$ and hence, $g'(p) = 0$ has at most two solutions over $(0,0.5]$ which are
$$
p_1 \approx 0.096,\ \text{and} \ p_2 = 0.5.
$$
Note that $g(p)$ is increasing for $p \in [0,p_1]$ and $g(p_2) = 0$. Therefore, $\min_{p \in [0,0.5]} g(p) = 0$ and the lemma follows.
\end{proof}

\begin{lemma}
\label{IZ-lemma2}
Let $b,t \in (0,1)$, and $f(z) = z^b(1-z)^b$. Then for any sequence of BMS channels $\left\{W_i\right\}_{i=1}^N$
$$
\sum_{z_i > t} I(W_i) \leq \frac{2}{t}\sum_{i=1}^{N} f(z_i),
$$
where $z_i = Z(W_i)$,
\end{lemma}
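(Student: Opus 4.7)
The plan is to reduce the claim to a pointwise inequality between $I(W_i)$ and $f(z_i)/t$ on the set where $z_i > t$, and then sum. I expect the proof to be quite short because the previous lemma in the appendix, \Lref{IZ-lemma}, already gives $I(W) \leq 1 - Z(W)^2$, and the rest is elementary.

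First I would invoke \Lref{IZ-lemma} to replace each term on the left-hand side: $I(W_i) \leq 1 - z_i^2 = (1-z_i)(1+z_i) \leq 2(1-z_i)$, since $z_i \in [0,1]$. After this step, it suffices to prove that for every index $i$ with $z_i > t$ one has $1 - z_i \leq f(z_i)/t$, or equivalently that $z_i^b (1-z_i)^{b-1} \geq t$ whenever $z_i > t$. I would establish this by splitting the left-hand side into the two factors: because $b > 0$ and $z_i > t$, $z_i^b \geq t^b$; and because $b - 1 < 0$ and $1 - z_i \in (0,1]$, the factor $(1-z_i)^{b-1}$ is at least $1$. Combined, $z_i^b (1-z_i)^{b-1} \geq t^b$, and since $t \in (0,1)$ and $b \in (0,1)$ give $t^b \geq t$, the desired pointwise bound follows.

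Putting the pieces together, I would conclude
\[
\sum_{z_i > t} I(W_i) \;\leq\; 2\sum_{z_i > t} (1-z_i) \;\leq\; \frac{2}{t}\sum_{z_i > t} f(z_i) \;\leq\; \frac{2}{t}\sum_{i=1}^N f(z_i),
\]
where the last inequality just uses $f \geq 0$ to extend the sum over all indices. The only potential snag is the boundary case $z_i = 1$, where $1 - z_i = 0$ and the pointwise inequality is trivial, so it causes no difficulty; and the case $z_i \leq t$ contributes nothing to the left-hand side by definition. There is no real obstacle here — the main step is just recognizing that $f(z) = z^b(1-z)^b$ with $b < 1$ majorizes $t(1-z)$ on $\{z > t\}$, which is why this particular form of $f$ was chosen earlier in Section~\ref{sec:four}.
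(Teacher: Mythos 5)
Your proof is correct and takes essentially the same route as the paper: both invoke \Lref{IZ-lemma} to get $I(W_i) \leq 1 - z_i^2 \leq 2(1-z_i)$ and then reduce to a pointwise comparison with $f(z_i)/t$ on the set $\left\{z_i > t\right\}$, extending the sum by nonnegativity of $f$. The only cosmetic difference is that the paper passes through the intermediate bound $z_i(1-z_i) \leq f(z_i)$ (via $x^b \geq x$ for $x \in [0,1]$, $b \in (0,1)$), whereas you verify $z_i^b(1-z_i)^{b-1} \geq t^b \geq t$ directly; both steps are equally valid.
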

\begin{proof}
Using \Lref{IZ-lemma} we have
\begin{align*}
\sum_{z_i > t} I(W_i) \leq \sum_{z_i > t} (1-z_i^2)
\leq \frac{2}{t} \sum_{z_i > t} z_i(1-z_i)
\leq \frac{2}{t} \sum_{z_i > t} f(z_i).
\end{align*}
\end{proof}
\begin{corollary}
\label{C-key}
Under the same assumptions in \Lref{IZ-lemma2}, 
$$
\Bigl|\left\{i \in [\![N]\!] : z_i \leq t \right\}\Bigr| \geq \sum_{i=1}^N I(W_i) - \frac{2}{t}\sum_{i=1}^{N} f(z_i).
$$
\end{corollary}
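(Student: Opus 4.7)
The plan is to split the total sum of symmetric capacities according to the threshold $t$ and use a trivial upper bound on one part together with \Lref{IZ-lemma2} on the other. Concretely, I would write
$$
\sum_{i=1}^N I(W_i) \; = \; \sum_{i:\,z_i \leq t} I(W_i) \; + \; \sum_{i:\,z_i > t} I(W_i).
$$

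For the first sum, I would simply invoke the fact that for any BMS channel $W$ one has $I(W) \leq 1$, so the sum over indices with $z_i \leq t$ is at most the cardinality $\bigl|\{i \in [\![N]\!] : z_i \leq t\}\bigr|$. For the second sum, I would apply \Lref{IZ-lemma2} directly, which yields the upper bound $\frac{2}{t}\sum_{i=1}^N f(z_i)$. Combining the two bounds and rearranging the resulting inequality to isolate the cardinality term gives exactly the claimed statement.

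There is essentially no obstacle here: the corollary is a one-line consequence of \Lref{IZ-lemma2} together with the trivial bound $I(W) \leq 1$. The only thing to be slightly careful about is that the set in the cardinality is indexed with $z_i \leq t$ while the lemma treats $z_i > t$, so one just needs to make sure the complementary partition is written consistently. I would keep the proof to a few lines: state the partition, cite the trivial bound on the small-$z$ part, cite \Lref{IZ-lemma2} on the large-$z$ part, and rearrange.
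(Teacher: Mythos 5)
Your proposal is correct and matches the paper's own proof essentially line for line: the paper likewise writes $\bigl|\{i : z_i \leq t\}\bigr| \geq \sum_{z_i \leq t} I(W_i) = \sum_{i=1}^N I(W_i) - \sum_{z_i > t} I(W_i)$ (implicitly using $I(W) \leq 1$) and then invokes \Lref{IZ-lemma2} on the large-$z_i$ sum. Nothing is missing; your explicit mention of the trivial bound $I(W)\leq 1$ just makes a step the paper leaves implicit.
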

\begin{proof}
The proof is by noting that
$$
\Bigl|\left\{i \in [\![N]\!] : z_i \leq t \right\}\Bigr| \geq \sum_{z_i \leq t} I(W_i) =  \sum_{i=1}^N I(W_i) - \sum_{z_i > t} I(W_i)
$$
followed by using \Lref{IZ-lemma2}. 
\end{proof}

\begin{lemma}
\label{partition-lemma}
Let $N=MK$, where $N,M,K \in \N$, and let $A = \left\{x_1,x_2,\dots,x_N\right\}$, where $x_i \in [0,1]$, for $i \in [\![N]\!]$, and
$$
\lambda = \frac{1}{N} \sum_{i=1}^N x_i.
$$
Then one can partition $A$ into $K$ subsets $A_1,A_2,\dots,A_K$ of size $M$ such that
\be{part1}
 \min_{j \in [\![K]\!]} \lambda_j \geq \lambda - \frac{1}{M},
\ee
where $\lambda_j$ is the average of the elements in $A_j$. 
\end{lemma}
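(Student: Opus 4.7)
The plan is to exhibit an explicit partition via a round-robin assignment on a sorted sequence and then bound the minimum subset average by a short telescoping argument.

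First I would sort the elements in decreasing order, so without loss of generality $x_1 \ge x_2 \ge \cdots \ge x_N$. Then I would define the partition $A_j \deff \{\, x_{(r-1)K+j} : r \in [\![M]\!]\,\}$ for $j \in [\![K]\!]$, i.e., each $A_j$ picks one element from each consecutive block of $K$ terms. Clearly $|A_j| = M$ and the $A_j$'s partition $A$. Because within each block $\{x_{(r-1)K+1},\dots,x_{rK}\}$ the values decrease with $j$, the subset averages $\lambda_j = (1/M)\sum_{r=1}^{M} x_{(r-1)K+j}$ satisfy $\lambda_1 \ge \lambda_2 \ge \cdots \ge \lambda_K$, so $\min_j \lambda_j = \lambda_K$.

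Next I would estimate $\lambda - \lambda_K$. Writing $\lambda$ as an average of the per-block averages and $\lambda_K$ as an average of the per-block minima, I get
$$
\lambda - \lambda_K \;=\; \frac{1}{M} \sum_{r=1}^{M} \left( \frac{1}{K}\sum_{j=1}^K x_{(r-1)K+j} \;-\; x_{rK} \right) \;\le\; \frac{1}{M} \sum_{r=1}^M \bigl( x_{(r-1)K+1} - x_{rK} \bigr),
$$
since in each block the average minus the minimum is bounded by the max minus the min. The key observation is that the inner sum collapses: because $x_{rK} \ge x_{rK+1}$ (the sequence is globally decreasing across block boundaries), one can chain the differences to obtain $\sum_{r=1}^M (x_{(r-1)K+1} - x_{rK}) \le x_1 - x_{MK} \le 1$, using $x_i \in [0,1]$. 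Dividing by $M$ yields $\lambda - \lambda_K \le 1/M$, which is \eq{part1}.

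The main obstacle — minor but the one point requiring care — is the chaining step, because the sum $\sum_r(x_{(r-1)K+1} - x_{rK})$ does not telescope on its nose; it only does so after repeatedly using $x_{rK} \ge x_{rK+1}$ to replace $-x_{rK} - x_{rK+1}(\text{next term})$ by the larger $-x_{rK+1} - x_{rK+1}$, i.e., by a simple induction on $M$. Apart from this, all steps are elementary, and no deep structure beyond monotonicity of the sorted sequence is used.
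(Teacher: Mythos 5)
Your proof is correct, and it takes a genuinely different route from the paper's. The paper argues non-constructively by an extremal exchange: it takes a partition maximizing $\min_{j}\lambda_j$ (finitely many partitions), assumes \eq{part1} fails, and swaps the smallest element of the worst subset with the largest element of some above-average subset; since $x_i\in[0,1]$ and $|A_j|=M$, each swap moves an affected average by at most $1/M$, so the minimum goes up, contradicting optimality. Your proof instead exhibits an explicit partition — sort decreasingly, assign round-robin — so that the minimizing subset is exactly $A_K$, the set of block minima, and the bound follows from a rearrangement that is cleaner than the induction you gesture at in your last paragraph:
\begin{equation*}
\sum_{r=1}^{M}\bigl(x_{(r-1)K+1}-x_{rK}\bigr) \;=\; x_1 - x_{MK} + \sum_{r=1}^{M-1}\bigl(x_{rK+1}-x_{rK}\bigr) \;\leq\; x_1 - x_{MK} \;\leq\; 1,
\end{equation*}
since each boundary difference $x_{rK+1}-x_{rK}$ is $\leq 0$ by the sort; no induction on $M$ is actually needed. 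What your approach buys: it is constructive (an $O(N\log N)$ procedure), it identifies the worst subset explicitly, it proves the slightly stronger bound $\lambda-\min_j\lambda_j\leq(x_1-x_{MK})/M=(\max_i x_i-\min_i x_i)/M$, and it sidesteps a tie-handling subtlety in the paper's exchange argument (if several subsets attain the minimum, one swap raises only one of them, so the claim that the minimum \emph{strictly} increases requires a secondary tie-breaking criterion, e.g., among optimal partitions minimize the number of subsets attaining the minimum). What the paper's argument buys is brevity and no need to sort or track indices; both arguments are elementary and fully adequate for the way the lemma is used in the proof of Theorem~\ref{thm-main2}.
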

\begin{proof}
Since the number of possible partitions is finite, one can find the partition such that 
$$
\min_{j \in [\![K]\!]} \lambda_j
$$
is maximum among all the possible partitions. We show that this partition must satisfy \eq{part1}. Let 
$$
j_0 = \text{argmin}_{j \in [\![K]\!]} \lambda_j.
$$
Assume to the contrary that \eq{part1} does not hold, i.e.,
$$
\lambda_{j_0} < \lambda - \frac{1}{M}.
$$
Note that there exists $j_1 \in [\![K]\!]$ such that
$$
\lambda_{j_1} > \lambda.
$$
Let $x_{i_0}$ be the minimum element in $A_{j_0}$ and $x_{i_1}$ be the maximum element in $A_{j_1}$. We make a new partition by swapping the elements $x_{i_0}$ and $x_{i_1}$ in $A_{j_0}$ and $A_{j_1}$. Note that 
$$
0 < x_{i_1} - x_{i_0} \leq 1.
$$
Therefore, the average of $A_{j_0}$ is strictly increased in the new partition, while the average of $A_{j_1}$ is decreased by at most $1/M$, which means it is still greater than the old average of $A_{j_0}$. Hence, 
$$
\min_{j \in [\![K]\!]} \lambda_j
$$ 
is strictly increased in the new partition which is a contradiction. This proves the lemma.
\end{proof}

\begin{lemma}
\label{2power-lemma}
Let $N = 2^n$, $2 \leq j \leq n/(1+\log 3)$, $\alpha,\beta > 0$ and $a_1,a_2,\dots,a_N \in \N$ such that $\sum_{i=1}^N a_i \leq 3^j$. Then we have
\be{2power-1}
\begin{split}
&\log \Bigl|\left\{i \in [\![N]\!]: 2^{s_j(i)} - a_i \leq \alpha n + \beta\right\}\Bigr| \\
&\leq n-j+ p(n,\alpha,\beta),
\end{split}
\ee 
where $s_j(i)$ is defined in \eq{sj-def} and
\be{p-def}
\begin{split}
p(n,\alpha,\beta)\deff &\bigl(2 - \log \gamma + \log n + \log (\alpha n + \beta)\bigr) (\log n - \log \gamma) \\
&+ \log \bigl(3 - \log \gamma + \log n + \log (\alpha n + \beta)\bigr),
\end{split}
\ee
where $\gamma = 1+\log 3$.
\end{lemma}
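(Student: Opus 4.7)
The plan is to bound $|S|$, where $S \deff \{i \in [\![N]\!] : 2^{s_j(i)} - a_i \leq T\}$ and $T \deff \alpha n + \beta$, by separating indices according to whether $a_i = 0$ or $a_i \geq 1$. Indices with $a_i = 0$ contribute only when $s_j(i) \leq L \deff \lfloor \log T \rfloor$, which the combinatorial distribution of $s_j(\cdot)$ forces to be a small subset of $[\![N]\!]$; while indices with $a_i \geq 1$ are bounded in number by the budget $\sum_i a_i \leq 3^j$, which the hypothesis $j \leq n/\gamma$ (writing $\gamma = 1 + \log 3$) keeps below $2^{n-j}$.

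For the first piece, I would use that exactly $\binom{j}{k} 2^{n-j}$ indices $i \in [\![N]\!]$ satisfy $s_j(i) = k$ (choose the $k$ ones among the leftmost $j$ bits; the remaining $n-j$ bits are free), yielding
$$
|\{i \in S : a_i = 0\}| \leq 2^{n-j} \sum_{k=0}^{L} \binom{j}{k} \leq 2^{n-j}\, j^{L+1},
$$
by $\binom{j}{k} \leq j^k$ and a geometric sum (valid for $j \geq 2$). For the second piece, $|\{i : a_i \geq 1\}| \leq \sum_i a_i \leq 3^j \leq 2^{n-j}$, the last step following by rearranging $j(1+\log 3) \leq n$ into $j \log 3 \leq n-j$. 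Combining, $|S| \leq 2^{n-j}(j^{L+1} + 1) \leq 2 \cdot 2^{n-j}\, j^{L+1}$, so
$$
\log |S| \leq n - j + 1 + (L+1) \log j.
$$

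To conclude, it remains to show $1 + (L+1) \log j \leq p(n,\alpha,\beta)$. Bounding $L \leq \log(\alpha n + \beta)$ and $\log j \leq \log n - \log \gamma$, and writing $u = \log n - \log \gamma$, $v = \log(\alpha n + \beta)$, the left-hand side is at most $1 + (v+1)u$, while $p = u^2 + uv + 2u + \log(u+v+3)$, leaving a margin of $u^2 + u + \log(u+v+3) - 1$. Since $j \geq 2$ together with $j \leq n/\gamma$ forces $n \geq 2\gamma$ and hence $u \geq 1$, this margin is strictly positive. The main obstacle is really just this final calibration step; $p$ is deliberately generous (the $u^2$ contribution is essentially superfluous), so once the split on whether $a_i$ vanishes is set up and the reduction $3^j \leq 2^{n-j}$ is invoked, matching the bound to $p$ is routine.
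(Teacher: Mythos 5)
Your proof is correct, and it takes a genuinely different, more elementary route than the paper's. The paper bounds $r=|A|$ by introducing the smallest $t\geq 2$ with $r\leq 2^{n-j}\sum_{l=0}^{t}\binom{j}{l}$ and running a mass-counting argument: the combinatorial lower bound $2^{n-j}\sum_{l=0}^{t-1}\binom{j}{l}2^{l}\leq\sum_{i\in A}2^{s_j(i)}$ (from your same level-set count $\binom{j}{l}2^{n-j}$) is played against the upper bound $(\alpha n+\beta)r+3^j$ coming from the membership condition plus the budget; solving for $t$ gives $t\leq 2-\log\gamma+\log n+\log(\alpha n+\beta)$ and then $r\leq 2^{n-j}(t+1)j^{t}$. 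You avoid the $t$ machinery entirely by splitting on whether $a_i$ vanishes: when $a_i=0$ the condition directly caps $s_j(i)\leq\lfloor\log(\alpha n+\beta)\rfloor$, and when $a_i\geq 1$ you charge the budget against a head count rather than against the mass $\sum_{i}2^{s_j(i)}$, using integrality of the $a_i$ to get at most $3^j\leq 2^{n-j}$ such indices (the same consequence of $j\leq n/\gamma$ the paper uses to absorb $3^j$). Your version is shorter and sidesteps the paper's slightly delicate step \eq{2power-2}, which is justified by minimality of $t$ only when $t>2$ (at $t=2$ the final bound holds directly anyway), as well as the ratio estimate $\binom{j}{l}\leq\binom{j}{t}$; conversely, the paper's mass argument never uses that the $a_i$ are integers and would survive with nonnegative reals, where your head count would not. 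One caveat shared by both proofs: your steps $(L+1)\log j\leq(v+1)u$ and the positivity of the margin $u^2+u+\log(u+v+3)-1$ implicitly assume $\alpha n+\beta\geq 1$ (so $L+1\geq 0$ and $v\geq 0$); for tiny $\alpha n+\beta$ the quantity $p(n,\alpha,\beta)$ itself can be negative or undefined, and the paper's derivation degenerates likewise, so this is a wrinkle of the statement rather than of your argument --- and it is harmless in the only application, where $\alpha>1$.
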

\begin{proof}
Let
$$
A = \left\{i \in [\![N]\!]: 2^{s_j(i)} - a_i \leq \alpha n + \beta \right\},
$$
and $r = |A|$. Note that $r \leq N = 2^n$. Let $t$ denote the smallest integer such that $t \geq 2$ and
\be{2power-0}
r \leq 2^{n-j} \sum_{l=0}^t {j \choose l}. 
\ee
Note that $t \leq j$. Then we have
\be{2power-2}
2^{n-j} \sum_{l=0}^{t-1} {j \choose l} 2^l \leq \sum_{i \in A} 2^{s_j(i)}.
\ee 
By definition of $A$ and \eq{2power-0} we have
\be{2power-3}
\sum_{i \in A} 2^{s_j(i)} \leq (\alpha n +\beta) r + \sum_{i=1}^N a_i \leq (\alpha n+\beta) 2^{n-j} \sum_{l=0}^t {j \choose l} +  3^j.
\ee 
By combining \eq{2power-2} and \eq{2power-3} we get
\be{2power-4}
2^{n-j}\sum_{l=0}^{t-1} {j \choose l} 2^l \leq (\alpha n+\beta) 2^{n-j} \sum_{l=0}^t {j \choose l} +  3^j.
\ee 
Since $j \leq n/(1+\log 3)$, we have $2^{n-j} \geq 3^j$. Combining this with \eq{2power-4} we get
\be{2power-5}
{j \choose t-1} 2^{t-1} \leq (\alpha n+\beta) \sum_{l=0}^t {j \choose l} \leq (\alpha n+\beta) (t+1) {j \choose t},
\ee
and therefore,
\be{2power-6}
2^{t-1} \leq 2 (\alpha n+\beta) j,
\ee
or equivalently
\be{2power-7}
\begin{split}
t &\leq 2 + \log (\alpha n+\beta) + \log j \\
& \leq 2 - \log(1+\log 3)+ \log n + \log (\alpha n + \beta).
\end{split}
\ee
By \eq{2power-0} and \eq{2power-7} we have
\begin{align*}
r \leq 2^{n-j} \sum_{l=0}^t {j \choose l} &\leq 2^{n-j} (t+1) j^t,
\end{align*}
and the lemma follows by taking logarithm of both sides and the bound on $t$ given in \eq{2power-7}.
\end{proof}

\begin{lemma}
\label{det-lemma1}
Given a sequence of channels $\left\{W_i\right\}_{i=1}^N$, where $N=2^n$, let $x_{0,i} = Z(W_i)$ and consider the extremal deterministic process $\left\{x_{j,i}\right\}$, for $j=0,1,\dots,n$ and $i \in [\![N]\!]$, defined in Definition\,\ref{ext-proc}. Let also $\left\{W_{j,i}\right\}$ denote the \textit{deterministic polarization} 2, defined in Definition\,\ref{polar2}, with the set of initial channels $W_{0,i} = W_i$ such that the permutations and the indices of skipped operations follow the extremal deterministic process $\left\{x_{j,i}\right\}$. Then we have
$$
Z(W_{j,i}) \leq x_{j,i},
$$
for $i \in [\![N]\!]$ and $j=0,1,\dots,n$.
\end{lemma}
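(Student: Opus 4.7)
The plan is to prove the inequality by induction on the polarization level $j$, exploiting the fact that both processes use exactly the same permutations and the same pattern of skipped versus performed operations, so the indices in the two processes are perfectly aligned and can be compared pair by pair.

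For the base case $j=0$, the claim $Z(W_{0,i}) \leq x_{0,i}$ holds with equality by construction, since $x_{0,i} = Z(W_i) = Z(W_{0,i})$. For the inductive step, I assume $Z(W_{j-1,i}) \leq x_{j-1,i}$ for every $i \in [\![N]\!]$ and consider an arbitrary pair $(W_{j-1,2i-1}, W_{j-1,2i})$, writing $u = x_{j-1,2i-1}$ and $v = x_{j-1,2i}$, and $z_1 = Z(W_{j-1,2i-1})$, $z_2 = Z(W_{j-1,2i})$, so that $z_1 \leq u$ and $z_2 \leq v$ by the hypothesis. The criteria specifying which operations are skipped and which permutations are applied are, by the statement of the lemma, inherited from the process $\left\{x_{j,i}\right\}$, so the decision whether to combine or skip this pair is dictated solely by $(u,v)$ and applies identically in both processes.

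If the pair is combined, then by Definition \ref{ext-proc} we must be in the case $u,v \leq 1$ or $u,v \geq 1$, and the extremal values at the next level are $x_{j,(2l)2^{n-j}+i} = u+v$ and $x_{j,(2l+1)2^{n-j}+i} = uv$. Invoking \eqref{eqZ1} gives $Z(W_{j-1,2i-1} \circledast W_{j-1,2i}) = z_1 z_2 \leq uv$, while \eqref{eqZ2} gives $Z(W_{j-1,2i-1} \boxcoasterisk W_{j-1,2i}) \leq z_1 + z_2 - z_1 z_2 \leq z_1 + z_2 \leq u+v$ (using $z_1 z_2 \geq 0$). Hence the desired inequalities $Z(W_{j,\cdot}) \leq x_{j,\cdot}$ hold at both offspring positions. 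If the pair is skipped, then by Definition \ref{polar2} the channels are simply carried over, i.e.\ $W_{j,(2l)2^{n-j}+i} = W_{j-1,2i-1}$ and $W_{j,(2l+1)2^{n-j}+i} = W_{j-1,2i}$, while \eqref{z-comb2} analogously sets $x_{j,(2l)2^{n-j}+i} = u$ and $x_{j,(2l+1)2^{n-j}+i} = v$, so the induction hypothesis transports the bound through the level unchanged.

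The only remaining bookkeeping is the permutation step applied at the beginning of level $j$: since both processes apply the same $\pi_{j,l}$ to the $l$-th sub-block, the relative ordering of the $W$-process matches that of the $x$-process, and the pairwise bound $Z(W_{j-1,\cdot}) \leq x_{j-1,\cdot}$ is preserved under this common relabeling. I do not anticipate any substantive obstacle here; the argument is a clean structural induction in which the two defining inequalities \eqref{eqZ1} and \eqref{eqZ2} of Bhattacharyya parameters under channel combining exactly mirror the rules \eqref{z-comb1} that define the extremal process, and the skip rule is designed precisely so that the comparison between the two processes survives the pairs $(u,v)$ with $u > 1 > v$ that would otherwise be problematic.
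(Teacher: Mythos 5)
Your proof is correct and takes exactly the route the paper intends: the paper's own proof is a one-line sketch (``induction on $j$ using \eq{eqZ1} and \eq{eqZ2} and Definition~\ref{ext-proc}''), and your argument is precisely that induction fleshed out, including the correct handling of the skipped pairs via \eq{z-comb2} and of the common permutations. No gaps.
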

\begin{proof}
The proof is by induction on $j$ and using \eq{eqZ1} and \eq{eqZ2} on the Bhattacharyya parameters of the combined channels and the definition of extremal deterministic process in Definition~\ref{ext-proc}. 
\end{proof}

\end{document}